\newtheorem{theorem}{Theorem}[section]
\newtheorem{lemma}[theorem]{Lemma}
\newtheorem{claim}[theorem]{Claim}
\newtheorem{corollary}[theorem]{Corollary}
\newtheorem{definition}[theorem]{Definition}
\newcommand{\ignore}[1]{}
\newcommand{\nukc}{{\sf NUkC}\xspace}
\newcommand{\kcwo}{{\sf kCwO}\xspace}
\newcommand{\rmfc}{{\sf RMFC-T}\xspace}
\newcommand{\Itree}{\ensuremath{{\cal I}_T}\xspace}
\newcommand{\I}{\ensuremath{{\cal I}}\xspace}
\newcommand{\cov}{\ensuremath{{\sf Cov}}\xspace}
\newcommand{\sse}{\ensuremath{\subseteq}\xspace}
\newcommand{\cI}{{\cal I}}
\newcommand{\poly}{\mathrm{poly}}
\newcommand{\polyloglog}{\mathrm{\poly\log\log}}
\newcommand{\floor}[1]{\lfloor#1\rfloor}
\title{The Non-Uniform $k$-Center Problem}
\author{Deeparnab Chakrabarty \and Prachi Goyal \and Ravishankar Krishnaswamy}
\date{Microsoft Research, India\\{\tt dechakr,t-prgoya,rakri@microsoft.com} }
\begin{document}

\maketitle

\begin{abstract}
In this paper, we introduce and study the {\sf Non-Uniform $k$-Center} (\nukc) problem.
	Given a finite metric space $(X,d)$ and a collection of balls of radii $\{r_1\geq \cdots \ge r_k\}$, the \nukc problem is to find a placement of their centers on the metric space and find the minimum dilation $\alpha$,
	such that the union of balls of radius $\alpha\cdot r_i$ around the $i$th center covers all the points in $X$. This problem naturally arises as a min-max vehicle routing problem with fleets of different speeds.

	The \nukc problem generalizes the classic $k$-center problem when all the $k$ radii are the same (which can be assumed to be $1$ after scaling). It also generalizes the $k$-center with outliers (\kcwo for short) problem when there are $k$ balls of radius $1$ and $\ell$ balls of radius $0$. There are $2$-approximation and $3$-approximation algorithms known for these problems respectively; the former is best possible unless P=NP and the latter remains unimproved for 15 years. \smallskip
	
	We first observe that no $O(1)$-approximation is to the optimal dilation is possible unless P=NP, implying that the \nukc problem is more non-trivial than the above two problems.
	Our main algorithmic result is an $(O(1),O(1))$-{\em bi-criteria} approximation result: we give an $O(1)$-approximation to the optimal dilation, however, we may open  $\Theta(1)$ centers of each radii. Our techniques also allow us to prove a simple (uni-criteria), optimal $2$-approximation to the \kcwo problem improving upon the long-standing $3$-factor.
	
	Our main technical contribution is a connection between the \nukc problem and the so-called firefighter problems on trees which have been studied recently in the TCS community.
    We show \nukc is as hard as the firefighter problem. While we don't know if the converse is true, we are able to adapt ideas from recent works~\cite{CC10,ABZ16} in non-trivial ways to obtain our
    constant factor bi-criteria approximation.
 \end{abstract}




\section{Introduction}
Source location and vehicle routing problems are extremely well studied~\cite{Lap98,MJS98,GN11} in operations research.
Consider the following location+routing problem: we are given a set of $k$ ambulances with speeds $s_1,s_2,\ldots,s_k$ respectively, and we have
to find the depot locations for these vehicles in a metric space $(X,d)$ such that any point in the space can be served by some ambulance as fast as possible.
If all speeds were the same, then we would place the ambulances in locations $S$ such that $\max_{v\in X} d(v,S)$ is minimized -- this is the famous $k$-center problem.
Differing speeds, however, leads to non-uniformity, thus motivating the titular problem we consider.

\begin{definition}[The Non-Uniform $k$-Center Problem (\nukc)]
	The input to the problem is a metric space $(X,d)$ and a collection of $k$ balls of radii $\{r_1\ge r_2\ge\cdots \ge r_k\}$.
	The objective is to find a placement $C\subseteq X$ of the centers of these balls,
	so as to minimize the dilation parameter $\alpha$ such that the union of balls of radius $\alpha\cdot r_i$ around the $i$th center covers all of $X$.
	Equivalently, we need to find centers $\{c_1,\ldots,c_k\}$ to minimize $\max_{v\in X} \min_{i=1}^k \frac{d(v,c_i)}{r_i}$.
\end{definition}

\begin{figure}[b]
  \centering
  \includegraphics[width=\textwidth]{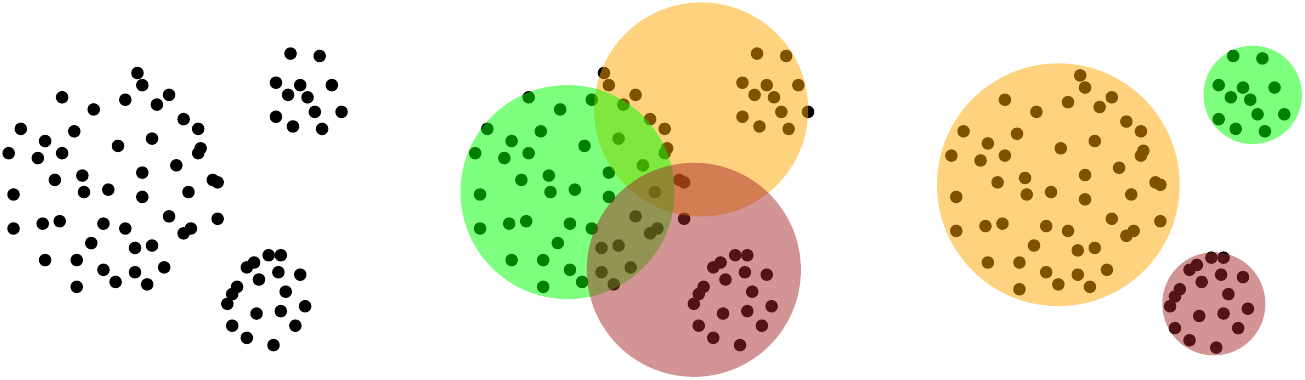}
  \caption{The left figure shows the dataset, the middle figure shows a traditional $k$-center clustering, and the right figure depicts a non-uniform clustering}\label{fig:example}
\end{figure}

As mentioned above, when all $r_i$'s are the same (and equal to $1$ by scaling), we get the $k$-center problem.
The $k$-center problem was originally studied by Gonzalez~\cite{Gon85} and Hochbaum and Shmoys~\cite{HS85} as a clustering problem of partitioning a metric space into different clusters to minimize maximum intra-cluster distances.
One issue (see Figure~\ref{fig:example} for an illustration and refer to~\cite{GRS98} for a more detailed explanation) with $k$-center (and also $k$-median/means) as an objective function for clustering is that it favors clusters of similar sizes with respect to cluster radii. However, in presence of qualitative information
on the differing cluster sizes, the non-uniform versions of the problem can arguably provide more nuanced solutions.
One such extreme special case was considered as the ``clustering with outliers'' problem~\cite{CKMN01} where some fixed number/fractions of points
in the metric space need not be covered by the clusters. In particular, Charikar et al~\cite{CKMN01} consider (among many problems) the $k$-center with outlier problem (\kcwo, for short) and show a $3$-approximation for this problem. It is easy to see that \kcwo is a special case of the \nukc problem when there are $k$ balls of radius $1$ and $\ell$ (the number of outliers) balls of radius $0$.

Motivated by the aforementioned reasons (both from facility location as well as from clustering settings), we investigate the worst-case complexity of the \nukc problem. Gonzalez~\cite{Gon85} and Hochbaum and Shmoys~\cite{HS85} give $2$-approximations for the $k$-center problem,
and also show that no better factor is possible unless P = NP. Charikar et al~\cite{CKMN01} give a $3$-approximation for the \kcwo problem, and this has been the best factor known for 15 years. Given these algorithms, it is natural to wonder if a simple $O(1)$-approximation exists for the \nukc problem.
In fact, our first result shows a qualitative distinction between \nukc and these problems: constant-approximations are impossible for \nukc unless P=NP.



\begin{theorem}\label{thm:hardness}
	For any constant $c \geq 1$, the \nukc problem does not admit a $c$-factor approximation unless $P=NP$, even when the underlying metric is a tree metric.
\end{theorem}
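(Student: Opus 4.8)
The plan is to prove Theorem~\ref{thm:hardness} by a gap-producing reduction from an NP-complete problem: in polynomial time we build a tree metric $(X,d)$ and a list of radii $r_1\ge\cdots\ge r_k$ so that YES instances give \nukc optimum exactly $1$ while NO instances give optimum larger than $c$. Since $c$ is a fixed constant, it suffices to do this for each $c$ (the reduction may have $c$-dependent degree), and the cleanest route is to first construct a \emph{basic gadget} with a gap of $1$ versus some absolute constant $\beta>1$, and then amplify.

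\noindent\textbf{The basic gadget.}
First I would choose a source problem that couples a combinatorial choice with a hard budget constraint --- natural candidates are a restricted covering problem (Exact Cover by $3$-Sets, or Set Cover with bounded set size), the decision version of a strongly NP-hard packing problem such as $3$-Partition, or the decision version of the firefighter/\rmfc{} problem, which is already tree-shaped. The gadget encodes the instance as a rooted tree $T$ with carefully chosen edge lengths (so that the induced shortest-path metric is a genuine tree metric) together with radii at a few distinct scales, designed so that (i) a feasible solution of the source instance yields a placement in which each ball of radius $r_i$ exactly covers one ``slot'' subtree, giving dilation $1$; and (ii) from an infeasible instance, every placement leaves some deep leaf farther than $\beta r_i$ from the $i$-th center for all $i$, because no assignment of balls to subtrees that is compatible with the radius scales can meet all of the covering demands simultaneously. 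The essential use of non-uniformity is that the distinct radius scales force the balls into an essentially rigid slot-to-subtree assignment; with a single scale one recovers only the $(2-\eps)$-hardness of ordinary $k$-center.

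\noindent\textbf{Amplification.}
Given a basic gadget with gap $1$ versus $\beta$, I would amplify by a recursive (self-composing) construction: hang a suitably rescaled copy of the whole gadget below every leaf of one copy of $T$, scaling that copy's edge lengths and its radius list by a small factor, and iterate $t$ times, taking the union of all the radius lists. A YES instance still has optimum $1$, solving the copies level by level. For a NO instance one argues that a placement cannot avoid paying a factor $\beta$ at every level it handles well --- doing well at the outer level necessarily leaves some inner copy under-resourced, i.e.\ in a standalone NO state, and this recurses --- so that the dilation compounds to $\beta^{t}$; choosing $t$ with $\beta^{t}>c$ finishes the proof. The object stays a tree metric and grows by only a constant factor per level, hence is polynomial for fixed $c$.

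\noindent\textbf{Main obstacle.}
The hard part will be the basic gadget and its recursive behaviour. Unlike uniform $k$-center one cannot just take the shortest-path metric of a combinatorial structure, since passing to a metric destroys the ``sharing'' of a set system --- precisely why uniform $k$-center is only $(2-\eps)$-hard --- so the reduction must genuinely exploit the interaction between the covering side (the tree) and the packing/assignment side (the limited budget of balls of each radius). Proving direction (ii) --- that \emph{every} placement in the NO case is driven into a global inconsistency ejecting some leaf by a constant factor --- is the delicate step, and one must additionally make this failure robust enough that the composition multiplies rather than merely preserves the gap, so that the recursion genuinely yields $\beta^{t}$.
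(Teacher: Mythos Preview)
Your high-level direction is right---reduce from a tree-shaped NP-hard problem and use geometrically separated radius scales to force a rigid ball-to-subtree assignment---and you even list \rmfc as a candidate source. The paper does exactly this: it reduces from the NP-hard decision ``is the \rmfc optimum $1$ or at least $2$?'' on a tree $T$ of height $h$, takes $X$ to be the leaves of $T$, sets $k=h$ with one ball per tree level, and obtains a gap of dilation $2$ (YES) versus dilation $\ge 2c$ (NO).

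The key difference is that the paper uses \emph{no amplification}. It bakes the target constant $c$ directly into a single gadget by assigning edge weights that grow by a factor of $(2c+1)$ per level as one moves toward the root, and defines the radii by the matching recursion $r_k=0$, $r_{i-1}=(2c+1)r_i+2(2c+1)$. With this scaling, a ball of radius $<2c\cdot r_t$ centered at any leaf below a level-$t$ node $v$ covers exactly the leaves of $v$'s subtree and no others, so any dilation-$<2c$ \nukc solution reads off an \rmfc solution using one node per level. The moral is that non-uniformity already hands you an amplification knob for free: since you control both the metric and the radius list, you can separate consecutive scales by any $c$-dependent factor in a single construction.

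Your amplification step, by contrast, is the shakiest part of the plan, and I do not think it works as stated. If you hang a copy of the gadget, scaled by a small $\eps$, below each outer leaf and give each copy its own radius list, then any outer ball that (with dilation $\alpha\ge\beta$) manages to cover an outer leaf $v$ also swallows the entire tiny inner copy hanging below $v$; so the composed NO instance still has optimal dilation at most whatever the outer NO instance had, and the gap does not compound to $\beta^2$. The alternative---sharing one inner radius list across all leaf-copies---breaks the YES direction, since one set of $k$ inner balls cannot cover $|L|$ disjoint copies. Making a recursive composition genuinely multiply the gap would require exactly the per-level geometric separation that the paper's direct construction already provides, at which point the amplification is redundant.
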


The hardness result is by a reduction from the so-called {\em resource minimization for fire containment}  problem on trees (\rmfc, in short), a variant of the firefighter problem. To circumvent the above hardness, we give the following bi-criteria approximation algorithm which is the main result of the paper, and which further highlights the connections with \rmfc since our algorithms heavily rely on the recent algorithms for \rmfc~\cite{CC10,ABZ16}. An $(a,b)$-factor bi-criteria algorithm for \nukc returns a solution which places at most $a$ balls of each type (thus in total it may use as many as $a\cdot k$ balls),
and the dilation is at most $b$ times the optimum dilation for the instance which places exactly one ball of each type.

\begin{theorem}\label{thm:algo}\label{thm:main-result}
	There is an $(O(1),O(1))$-factor  bi-criteria algorithm for the \nukc problem.
\end{theorem}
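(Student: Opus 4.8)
The plan is to reduce \nukc (up to constant factors on both criteria, and after scaling) to a covering problem on a rooted tree that is essentially \rmfc with non-uniform budgets, and then to round a linear-programming relaxation of that tree problem by adapting the framework of~\cite{CC10,ABZ16}.

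\emph{Step 1: Preprocessing.} By enumerating the $\mathrm{poly}(n)$ candidate values $d(u,v)/r_i$ and rescaling, it suffices to handle instances where the optimal dilation equals $1$, i.e.\ some placement of balls, the $i$th of radius exactly $r_i$, covers $X$. Rounding each $r_i$ up to the nearest power of $2$ loses only a factor $2$ in dilation, so we may assume the radii take $L$ distinct values $\rho_1 > \rho_2 > \cdots > \rho_L$, each a power of $2$, with $k_\ell$ balls of radius $\rho_\ell$. The goal becomes: in polynomial time, open $O(k_\ell)$ balls of radius $O(\rho_\ell)$ for each $\ell$ so as to cover $X$.

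\emph{Step 2: Reduction to a tree problem.} Using a top-down greedy net construction at the geometrically decreasing scales $\rho_1,\rho_2,\dots,\rho_L$, build a laminar family of subsets of $X$ organized as a rooted tree $T$ with $L+1$ levels: the root is $X$; each level-$\ell$ node is a cluster of diameter $O(\rho_\ell)$ carrying a representative point; and the leaves are the representatives of the level-$L$ clusters. Two properties are needed. \emph{(Soundness.)} If we ``select'' a set of nodes of $T$ so that every leaf has a selected ancestor, then opening, for each selected level-$\ell$ node, one ball of radius $O(\rho_\ell)$ at its representative yields a NUkC solution of dilation $O(1)$ --- the finest clusters have diameter $O(\rho_L)$ and the smallest balls have radius $\rho_L$, so covering all representatives covers $X$ with $O(1)$ slack --- using at level $\ell$ exactly as many balls as nodes selected at level $\ell$. \emph{(Completeness.)} If the dilation-$1$ solution exists, one can select nodes of $T$ so that every leaf has a selected ancestor while, for every $\ell$, the number of nodes selected at levels $1,\dots,\ell$ together is $O\big(\sum_{\ell'\le\ell}k_{\ell'}\big)$; this ``cumulative budget'' constraint is exactly of the \rmfc flavour. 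It is obtained by charging each class-$\ell$ ball of the optimal placement to the ancestor nodes, at levels $\le \ell$, whose subtrees meet its coverage region, with the net scales chosen so that each ball is charged to only $O(1)$ nodes. Arguing this $O(1)$ bound in a general (as opposed to doubling) metric is the delicate point of the reduction.

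\emph{Step 3: Rounding the tree problem.} Write the natural LP: a variable $x_v \ge 0$ per node, cumulative budget constraints $\sum_{v:\,\mathrm{level}(v)\le\ell} x_v \le O\big(\sum_{\ell'\le\ell}k_{\ell'}\big)$ for all $\ell$, and covering constraints $\sum_{v\text{ an ancestor of }u} x_v \ge 1$ for every leaf $u$. The completeness property makes this LP feasible, so we can solve it; it remains to round a fractional solution to an integral one while inflating the budgets by only $O(1)$. Generic LP rounding for \rmfc-type constraints gives only an $O(\log^* n)$ (or worse) guarantee, so to reach $O(1)$ we import the more intricate ideas of~\cite{ABZ16}: show there is a threshold level $\bar h$ such that the LP support above $\bar h$ can be taken to have polynomially bounded size, enumerate all integral selections on that ``top'' part, and round the ``bottom'' part --- where each root-to-leaf path already carries most of the leaf's covering mass on levels below $\bar h$ --- directly by a flow / iterative-rounding argument with $O(1)$ budget loss. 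Re-deriving this machinery for the non-uniform cumulative budgets arising here is the main obstacle. Combining the enumerated top with the rounded bottom gives an integral tree selection obeying the budgets up to a constant factor; feeding it into the soundness direction of Step 2 produces a NUkC solution using $O(k_\ell)$ balls of radius $O(\rho_\ell)$ for each $\ell$ --- i.e.\ $O(1)$ balls of each of the original $k$ radii and dilation $O(1)$ --- which proves Theorem~\ref{thm:main-result}.
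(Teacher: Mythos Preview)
Your high-level plan --- reduce to a tree covering problem with per-level budgets and then round via the \cite{ABZ16} machinery --- matches the paper's, but two of your three steps contain genuine gaps that the paper resolves quite differently.

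\textbf{Step 2 (completeness of the metric tree).} You build $T$ purely from the metric as a hierarchical net and then try to argue that an optimal \nukc solution induces a tree selection with $O(1)$-inflated budgets, by charging each radius-$\rho_\ell$ ball to the $O(1)$ level-$\ell$ net cells it meets. You correctly flag this as ``the delicate point,'' and indeed it \emph{fails} in general metrics: a single ball of radius $\rho_\ell$ can intersect an unbounded number of net cells at scale $\Theta(\rho_\ell)$ once the doubling assumption is dropped. The paper never attempts a metric-only reduction. Instead it solves the natural \nukc LP first and uses the fractional solution $x$ to build the tree: at each level the ``winners'' are chosen greedily among surviving points by \emph{minimum cumulative LP coverage}, and the $y$-values on the tree are read off from $x$. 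Feasibility of the budget constraints then follows from disjointness of the winners' balls together with the LP budget constraint, not from any packing argument in the metric. This LP-aware tree construction is the central technical idea you are missing; without it, Step~2 does not go through in general metrics.

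\textbf{Step 3 (enumeration).} Even granting a correct tree, you propose to import the \cite{ABZ16} top/bottom split and enumerate integral selections on the top of the tree. The paper explicitly explains why this is not enough: the tree is produced \emph{from} the \nukc LP, and that LP already has an $\Omega(\log^* n)$ integrality gap, so the ``right'' top selection need not be visible in the tree you build. The paper's fix is to do the enumeration \emph{on the \nukc instance itself}, before the tree is built: it maintains a consistent pair $(A,D)$ of guessed open centers and forbidden placements, uses a potential-function argument to bound the recursion depth by $\mathrm{poly}\log\log k$, and only after each guess re-solves a refined LP and rebuilds the tree. A second difference is the preprocessing: the paper buckets so that the \emph{budgets} are geometric ($k_i=2^i$), not the radii; this is what makes the additive $+h$ loss from the \cite{ABZ16} rounding absorbable into $O(k_t)$ at the relevant levels. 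Your power-of-two rounding of radii does not give this.
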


Furthermore, as we elucidate below, our techniques also give uni-criteria results when the number of distinct radii is $2$. In particular, we get a $2$-approximation for the \kcwo problem and a $(1+\sqrt{5})$-approximation when there are only two distinct types of radii.
\begin{theorem}
	\label{thm:kcwo}
	There is a $2$-approximation for the \kcwo problem.
\end{theorem}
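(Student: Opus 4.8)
\emph{Overall plan and discretisation.} I would run the same high-level strategy that underlies the bi-criteria algorithm for \nukc, which in this special case (the $\ell$ balls of radius $0$ simply let us discard $\ell$ points for free) collapses to a genuine uni-criteria guarantee of $2$. Since $(X,d)$ is finite, an optimal solution may be assumed to place its centers at data points, so the optimal dilation equals some $d(u,v)$; I would guess this value and, after scaling, call it $1$. Thus OPT covers all but at most $\ell$ points by $k$ balls of radius $1$. Build the threshold graph $G=(X,E)$ with $uv\in E$ iff $d(u,v)\le 2$, so that $N_G[v]=B(v,2)$ for every vertex $v$.

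\emph{Reduction to a covering problem on $G$.} Fix an optimal solution: centers $c_1,\dots,c_k$ and an outlier set $O$ with $|O|\le\ell$ and $X\setminus O\subseteq\bigcup_j B(c_j,1)$. Assigning each covered point to one of its centers partitions $X\setminus O$ into groups $O_1,\dots,O_k$ with $O_j\subseteq B(c_j,1)$. Each $O_j$ has diameter at most $2$, hence is a clique of $G$; moreover $O_j\subseteq B(u,2)=N_G[u]$ for \emph{every} $u\in O_j$. So it suffices to solve the following problem on $G$: under the promise that $\ge n-\ell$ vertices are covered by $\le k$ cliques, find $\le k$ vertices whose closed neighbourhoods cover $\ge n-\ell$ vertices; such vertices, used as centers of balls of radius $2=2\cdot\mathrm{OPT}$, cover $\ge n-\ell$ points. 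For the guessing, I would run this subroutine for each candidate threshold $\tau\in\{d(u,v)\}$ and output the solution for the smallest $\tau$ that succeeds; this $\tau$ is at most $2\cdot\mathrm{OPT}$, since at the largest candidate $\le 2\cdot\mathrm{OPT}$ the graph already contains every $O_j$ as a clique, so the promise — and hence the subroutine — already succeeds there.

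\emph{LP and round-or-cut.} Consider the covering LP: $y\ge 0$ on vertices with $\sum_v y_v\le k$; $z\in[0,1]^X$ with $z_v\le\sum_{u\in N_G[v]}y_u$ for all $v$; and $\sum_v z_v\ge n-\ell$. It is feasible: put one unit of $y$-mass on an arbitrary representative $u_j\in O_j$ of each nonempty group; then every $v\in\bigcup_jO_j$ has $\sum_{u\in N_G[v]}y_u\ge 1$ (since $v\in N_G[u_j]$ whenever $v\in O_j$), so $\sum_vz_v\ge n-\ell$ — and this is the only step that uses the promise. I would then run the ellipsoid method on this polytope augmented with a family of inequalities valid for all integral solutions, and on each fractional iterate attempt a rounding by an outlier-aware clustering step: repeatedly take the still-uncovered vertex $p$ of largest fractional coverage, open an integral center at $p$ itself (so its radius-$2$ ball is exactly $N_G[p]$), mark covered every remaining vertex within distance $2$, and charge the fractional mass near $p$ against the budget. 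If the clustering ends with $\le k$ centers covering $\ge n-\ell$ vertices, output it; otherwise the obstruction produces a valid inequality violated by the current $y$, which is added before re-solving. If no iterate is ever roundable, the augmented polytope is empty and the guessed radius was too small.

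\emph{The main obstacle.} The difficulty is that the rounding must lose \emph{nothing} in the coverage count: a plain max-coverage greedy on $G$ (open the vertex covering the most still-uncovered vertices, repeated $k$ times) only brings the number of uncovered vertices down to $\ell+(1-\tfrac1k)^k(n-\ell)$, which can well exceed $\ell$, so the rounding must be genuinely steered by the fractional solution and by the containment $O_j\subseteq N_G[u]$. Dually, one must engineer the family of valid inequalities so that (i) every failed rounding exhibits one that is violated, and (ii) their accumulation makes the LP infeasible \emph{precisely} when no $k$ balls of radius $1$ can leave at most $\ell$ points uncovered. This is exactly the ingredient that mirrors the ``subtree/path'' inequalities in the \rmfc algorithms of~\cite{CC10,ABZ16}, transported from the firefighter tree to the threshold graph $G$; obtaining the right inequalities together with the matching rounding guarantee is the crux of the proof.
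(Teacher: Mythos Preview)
Your proposal has a genuine gap: you set up a round-or-cut framework but then explicitly defer the crux --- the family of valid inequalities and the matching rounding guarantee --- to an unspecified analogue of the \rmfc tree inequalities. As written, nothing in your outline establishes that a failed rounding always produces a violated valid inequality, nor that the process terminates with the claimed guarantee. You yourself flag this as ``the crux of the proof,'' so the proposal is a plan rather than a proof.

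The paper's argument is both different and far more direct, and it avoids round-or-cut entirely. It writes the natural \ref{lp:nukc} for the \kcwo instance $(k,1),(\ell,0)$ and applies the LP-aware reduction of Section~\ref{sec:3}: iteratively pick ``winner'' points by \emph{minimum} fractional coverage at the current radius and cluster nearby points to them, yielding a depth-$2$ tree $T$ together with a feasible fractional \ref{lp:rmfc} solution of dilation $1$ (this is \Cref{lem:rmfcfeasible}). The key structural fact is then that \ref{lp:rmfc} on depth-$2$ trees has \emph{no} integrality gap when the budgets $k_t$ are integers: one can shift mass between two fractional level-$1$ vertices (increasing on the one with more children, compensating at level $2$) until the solution is integral. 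Translating the integral \rmfc solution back via \Cref{lemma:rmfc2kc} opens at most $k_t$ balls of radius $2r_{\ge t}$; since $r_{\ge 1}=r_1+r_2=1+0=1$ here, this is exactly a $2$-approximation with no blow-up in the number of balls. In short, the whole proof is: solve one LP, build a two-level tree, round exactly, map back --- no separation oracle, no iterated cuts, and the ``minimum-coverage winner'' rule is what makes the fractional tree solution feasible (your ``largest fractional coverage'' greedy is the opposite choice and does not obviously yield the same guarantee).
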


\begin{theorem}\label{thm:twotypes}
There is a $(1+\sqrt{5})$-approximation for the \nukc problem when the number of distinct radii is at most $2$.
\end{theorem}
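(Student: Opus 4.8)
Throughout write $r_1 \ge r_2$ for the two radii and $k_1$ (resp.\ $k_2$) for the number of balls of radius $r_1$ (resp.\ $r_2$). The plan follows the standard template for $k$-center-type problems. First I would \emph{guess the optimal dilation} $\alpha^\star$: since $\alpha^\star = d(u,v)/r_i$ for some $u,v \in X$ and $i\in\{1,2\}$, there are only $O(n^2)$ candidates, so we run the rest of the algorithm for each and keep the best feasible output. Fixing the correct guess, set $R := \alpha^\star r_1 \ge \rho := \alpha^\star r_2$; we are then \emph{promised} that $X$ is covered by $k_1$ balls of radius $R$ together with $k_2$ balls of radius $\rho$, and the goal becomes: cover $X$ with $k_1$ balls of radius $(1+\sqrt5)R$ and $k_2$ balls of radius $(1+\sqrt5)\rho$. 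To drive the rounding I would write the natural covering LP with a variable $y_v\in[0,1]$ for a big ball at $v$ and $z_v\in[0,1]$ for a small ball at $v$, budgets $\sum_v y_v \le k_1$ and $\sum_v z_v \le k_2$, and coverage $\sum_{v:\,d(u,v)\le R} y_v + \sum_{v:\,d(u,v)\le\rho} z_v \ge 1$ for every $u\in X$; the promised integral solution certifies feasibility.

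The core is a two-phase rounding controlled by a parameter $t\ge 1$ fixed at the end. Call a point $u$ \emph{big} if its $R$-neighbourhood carries at least half the fractional demand, $\sum_{v:\,d(u,v)\le R} y_v \ge \tfrac12$, and \emph{small} otherwise (so its $\rho$-neighbourhood then carries more than half the demand). On the big points run a greedy clustering: repeatedly take an unclustered big point $u$, open a genuine big ball at $u$, and absorb every point within distance $2tR$ of $u$; since successive centres lie more than $2tR \ge 2R$ apart, the disjoint $R$-balls around them each capture at least $\tfrac12$ of distinct fractional $y$-mass, charging against $\sum_v y_v \le k_1$, while every big point lands within $2tR$ of an opened big ball. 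Running the symmetric procedure on the residual (small) points with small balls and radius proportional to $\rho$, one gets two competing blow-ups: a factor $\approx 2t$ from the big phase versus a factor $\approx 2 + 2/t$ from the small phase, whose maximum is minimised at the golden ratio $t = \tfrac{1+\sqrt5}{2}$, for an overall dilation of $1+\sqrt5$ times $\alpha^\star$. When $r_2 = 0$ this degenerates: the small phase is vacuous and only the $k$-center-style argument survives — every uncovered optimal ball has diameter at most $2R$, so greedy at radius $2R$ succeeds — recovering the clean $2$-approximation of Theorem~\ref{thm:kcwo}.

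The main obstacle is that the charging arguments above are naturally \emph{bi-criteria}: the threshold $\tfrac12$ means a ball split across several clusters must be rounded up in each, costing a factor near $2$ in the \emph{number} of balls of each type — exactly the regime of Theorem~\ref{thm:main-result}, and not good enough here. To obtain the \emph{uni-criteria} bound one must use that with only two radii the associated firefighter-type (\rmfc) instance is a tree of depth two; in that special structure the fractional big-ball/small-ball assignment can be reorganised into a laminar family on which each budget $k_1$ and $k_2$ can be respected \emph{exactly} upon rounding (adapting the LP-rounding ideas of~\cite{CC10,ABZ16}). Carrying out this laminar rounding for depth-two trees, and verifying that it composes with the metric bookkeeping so that the constant stays $1+\sqrt5$ rather than degrading, is where the real work lies; the guessing step, LP feasibility, and disjointness-based charging are routine by comparison.
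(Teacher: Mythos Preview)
Your proposal has the right target numbers---the golden ratio threshold and the balance $2t$ versus $2+2/t$---but the mechanism you describe does not produce them. In your write-up, $t$ is a clustering-radius multiplier: the big phase absorbs points within $2tR$ of a chosen big point, and the ``symmetric procedure'' on small points clusters at a radius proportional to $\rho$. From that description the big-ball dilation is $2t$ and the small-ball dilation is $2$; no $2/t$ term appears anywhere, so your claimed small-phase factor $2+2/t$ is unjustified. Moreover, your halving-threshold charging genuinely only bounds the number of opened balls by $2k_1$ and $2k_2$, and you yourself note this. Gesturing at the end to ``depth-two \rmfc laminar rounding'' does not repair this, because that rounding does not interact with your parameter $t$ at all.

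The paper's proof is structurally different and much shorter. The parameter set to the golden ratio is \emph{not} a clustering radius but a threshold on the ratio $r_1/r_2$, and the algorithm is a two-way case split:
\begin{itemize}
\item If $r_1 < \theta r_2$ with $\theta=(1+\sqrt5)/2$, ignore the non-uniformity entirely and run vanilla $k$-center with $k=k_1+k_2$. The $k$-center optimum is at most $r_1$, so the $2$-approximation returns $k_1+k_2$ centres with radius $2r_1$; assigning $k_1$ of them radius $r_1$ and $k_2$ of them radius $r_2$ gives dilation at most $2r_1/r_2<2\theta=1+\sqrt5$.
\item If $r_1 \ge \theta r_2$, run the LP-aware reduction of Section~\ref{sec:3} to obtain a depth-$2$ \rmfc tree together with a feasible fractional solution; the paper proves (the Claim inside the proof) that the \rmfc LP has integrality gap exactly $1$ on depth-$2$ trees, so one gets an integral \rmfc solution respecting both budgets $k_1,k_2$ \emph{exactly}. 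Lemma~\ref{lemma:rmfc2kc} then converts this back to $k_1$ balls of radius $2(r_1+r_2)$ and $k_2$ balls of radius $2r_2$, i.e.\ dilation $2(1+r_2/r_1)\le 2(1+1/\theta)=1+\sqrt5$.
\end{itemize}
So the trade-off $2\theta$ versus $2(1+1/\theta)$ is between two \emph{different algorithms}, chosen according to the instance's radius ratio---not two phases of one algorithm tuned by a free parameter. In particular there is never a bi-criteria stage to repair: the depth-$2$ integrality claim gives a uni-criteria \rmfc solution directly, and the only radius loss is the fixed $2r_{\ge t}$ of Lemma~\ref{lemma:rmfc2kc}.
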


\subsection{Discussion on Techniques}

Our proofs of \Cref{thm:hardness,thm:algo} shows a strong connection between \nukc and the so-called {\em resource minimization for fire containment}  problem on trees (\rmfc, in short). This connection is one of the main findings of the paper, so
we first formally define this problem.
\begin{definition}[Resource Minimization for Fire Containment on Trees (\rmfc)]
	Given a rooted tree $T$ as input, the goal is to select a collection of non-root nodes $N$ from $T$ such that (a) every root-leaf path has at least one vertex from $N$, and (b) $\max_t |N \cap L_t|$ is minimized,
	where $L_t$ is the $t$th-{\em layer} of $T$, that is, the vertices of $T$ at exactly distance $t$ from the root.
\end{definition}
To understand the reason behind the name, consider a fire starting at the root spreading to neighboring vertices each day; the \rmfc problem minimizes the number of firefighters needed per day so as to prevent the fire spreading to the leaves of $T$.

It is NP-hard to decide if the optimum of \rmfc is 1 or not~\cite{FKMR07,KM10}. 
Given any \rmfc instance and any $c > 1$, we construct an \nukc instance on a tree metric such that in the ``yes'' case there is always a placement with dilation $=1$ which covers the metric, while in the ``no'' case even a dilation of $c$ doesn't help.
Upon understanding our hardness construction, the inquisitive reader may wonder if the reduction also works in the other direction, i.e., whether we can solve \nukc using a reduction to \rmfc problem.
Unfortunately, we do not know if this is true even for two types of radii. 
 However, as we explain below we still can use positive results for the \rmfc problem to design good algorithms for the \nukc problem.

Indeed, we start off by considering the natural LP relaxation for the \nukc problem and describe an {\em LP-aware reduction} of \nukc  to \rmfc.
More precisely, given a feasible solution to the LP-relaxation for the given \nukc instance, we describe a procedure to obtain an instance of \rmfc defined by a tree $T$, with the following properties: (i) we can exhibit a feasible fractional solution for the LP relaxation of the \rmfc instance, and (ii) given any feasible {\em integral} solution to the \rmfc instance, we can obtain a feasible integral solution to the \nukc instance which dilates the radii by at most a constant factor.
Therefore, an {\em LP-based} $\rho$-approximation to \rmfc would immediately imply $(\rho,O(1))$-bicriteria approximation algorithms for \nukc.
This already implies \Cref{thm:kcwo} and \Cref{thm:twotypes} since the corresponding \rmfc instances have no integrality gap.
Also, using a result of Chalermsook  and Chuzhoy~\cite{CC10}, we directly obtain an $\left(O(\log^* n),O(1)\right)$-bicriteria approximation algorithm for \nukc.

Here we reach a technical bottleneck: Chalermsook and Chuzhoy~\cite{CC10} also show that the integrality gap of the natural LP relaxation for \rmfc is $\Omega(\log^* n)$. When combined with our hardness reduction in~\Cref{thm:hardness} , this also implies a $(\Omega(\log^* n), c)$ integrality gap for any constant $c > 1$ for the natural LP relaxation for \nukc. That is, even if we allow a violation of $c$ in the radius dilation, there is a $\Omega(\log^* n)$-integrality gap in terms of the violation in number of balls opened of each type.

However, very recently, Adjiashvili, Baggio and Zenklusen~\cite{ABZ16} show an improved $O(1)$-approximation for the \rmfc problem.
At a very high level, the main technique in~\cite{ABZ16} is the following. Given an \rmfc instance, they carefully and efficiently ``guess'' a subset of the optimum solution, such that the natural LP-relaxation for covering the uncovered leaves has $O(1)$-integrality gap. However, this guessing procedure crucially uses the tree structure of $T$ in the \rmfc problem. Unfortunately for us though, we get the \rmfc tree only after solving the LP for \nukc, which already has an $\Omega(\log^* n)$-gap! Nevertheless, inspired by the ideas in~\cite{ABZ16}, we show that we can also efficiently preprocess an \nukc instance, ``guessing'' the positions of a certain number of balls in an optimum solution,
such that the standard LP-relaxation for covering the uncovered points indeed has $O(1)$-gap. We can then invoke the LP-aware embedding reduction to \rmfc at this juncture to solve our problem. This is quite delicate, and is the most technically involved part of the paper.

\subsection{Related Work and Open Questions}
The $k$-center problem~\cite{Gon85,HS85} and the $k$-center with outliers~\cite{CKMN01} probems are classic problems in approximation algorithms and clustering.
These problems have also been investigated under various settings such as the incremental model~\cite{CCFM97,MK08}, streaming model~\cite{COP03, MK08}, and more recently in the map-reduce model~\cite{IM15,MKCWM15}. Similarly, the $k$-median~\cite{CGST99, JV99, LS13, BPRST15} and $k$-means~\cite{JV99,KMNPSW02,HM04, KSS04} problems are also classic problems studied extensively in approximation algorithms and clustering. The generalization of $k$-median to a routing+location problem was also studied recently~\cite{GN11}. It would be interesting to explore the complexity of the non-uniform versions of these problems. Another direction would be to explore if the new non-uniform model can be useful in solving clustering problems arising in practice.


%

\section{Hardness Reduction}

In this section, we prove \Cref{thm:hardness} based on the following NP-hardness~\cite{KM10} for \rmfc.
\begin{theorem}\label{thm:np}
	\cite{KM10}
	Given a tree $T$ whose leaves are at the same distance from the root,
	it is NP-hard to distinguish between the following two cases.
	YES: There is a solution to the \rmfc instance of value $1$.
	NO: All solutions to the \rmfc instance has value $2$.
\end{theorem}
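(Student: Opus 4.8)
We reduce from an NP-complete problem, concretely $3$-SAT (one could equally use $3$-Dimensional Matching or Exact Cover by $3$-Sets). Given a $3$-CNF formula $\phi$ with variables $x_1,\dots,x_n$ and clauses $C_1,\dots,C_m$, the goal is to construct in polynomial time a rooted tree $T$ all of whose leaves lie at a common depth $h=\poly(n,m)$, with the following guarantees: (i) every instance produced admits an \rmfc solution of value at most $2$; (ii) if $\phi$ is satisfiable, there is an \rmfc solution of value $1$; (iii) if $\phi$ is unsatisfiable, every \rmfc solution has value at least $2$ --- hence exactly $2$, by (i). Since distinguishing cases (ii) and (iii) then decides satisfiability of $\phi$, the theorem follows.

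\textbf{The construction.} The tree has a ``decision'' region near the root followed by per-clause ``verification'' gadgets near the leaves. The decision region is arranged so that a value-$1$ strategy is effectively forced --- after a harmless normalization, see below --- to make, for each variable $x_i$, a single binary choice: to cut one of two designated vertices $u_i^{T}, u_i^{F}$, which we read off as an assignment $a$ to $x_i$. Crucially, this choice is broadcast downward, in the sense that cutting $u_i^{T}$ (resp.\ $u_i^{F}$) is what covers the parts of the clause gadgets that depend on the literal $x_i$ (resp.\ $\bar x_i$). Each clause gadget is built so that \emph{all} of its leaves get covered precisely when at least one of the clause's three literals is cut in the decision region --- i.e.\ precisely when the clause is satisfied by $a$ --- and it is made sufficiently wide and redundant (each relevant leaf replaced by a large star of fresh leaves) that no single firefighter placed inside a gadget can cover all of its leaves; thus a clause can be rescued only by satisfying it, not by brute force. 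Finally, every root-leaf path is padded with degree-one vertices (done so as not to grant the firefighter any extra covering power) until all leaves lie at depth $h$; the coarse layered structure, plus at most one or two vertices near the root, makes value $\le 2$ always achievable, giving (i).

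\textbf{Correctness.} If $a$ satisfies $\phi$, the strategy ``cut $u_i^{a(x_i)}$ in the layer of $x_i$, and cut nothing elsewhere'' uses one firefighter per layer and, by the broadcast property, cuts every clause gadget, so it is feasible; hence there is an \rmfc solution of value $1$, proving (ii). Conversely, let $N$ be any feasible value-$1$ strategy. A normalization argument --- using that the widened gadgets cannot be covered from within and that a cut at a designated vertex dominates every other choice available in its layer --- shows that, without increasing the value, $N$ may be taken to cut exactly one of $u_i^{T},u_i^{F}$ per variable layer and nothing useful elsewhere. Reading off the induced assignment $a$ and using that $N$ is feasible, every clause gadget is cut; by the gadget property, every clause is then satisfied by $a$, so $\phi$ is satisfiable. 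Contrapositively, if $\phi$ is unsatisfiable no value-$1$ strategy exists, proving (iii).

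\textbf{Main obstacle.} All the difficulty is in engineering the decision region and the clause gadgets so that a per-layer budget of exactly one firefighter simultaneously (a) pins down a single, globally consistent truth value for each variable, shared by every clause using it; (b) makes ``gadget $j$ is fully covered'' equivalent to an \emph{or} over the three literals of $C_j$; and (c) is robust against the firefighter spending its budget inside a gadget to save an unsatisfied clause --- all while keeping every leaf at the same depth. Reconciling (a) with (b) is the delicate point: a naive layout with one shared vertex per literal and one layer per variable fails, because a shared vertex has only one ancestor per layer, so a more careful (possibly multi-layer-per-variable) gadget is required. Making the normalization step in the hard direction fully rigorous is the other place where the real work lies; the remaining items --- polynomial size, padding to uniform depth, and the value-$\le 2$ bound --- are routine.
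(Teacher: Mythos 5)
This theorem is not proved in the paper at all --- it is imported verbatim from~\cite{KM10} --- so there is no in-paper argument to match your proposal against; the relevant question is only whether your attempt stands on its own. It does not: it is a proof plan whose central object, the gadget, is never constructed, and you yourself flag the fatal issue. In a tree every leaf has a \emph{single} root--leaf path, so to make ``clause $C_j$'s leaves are saved iff some literal of $C_j$ is cut'' work with one firefighter per level, the three literal-choice vertices of $C_j$ (which live at the three distinct levels of its three variables) must all be ancestors of $C_j$'s leaves, i.e.\ nested on one chain. Simultaneously, the vertex encoding ``$x_i=\text{true}$'' must be shared by (be a common ancestor of) every clause using the literal $x_i$, or else nothing enforces a globally consistent assignment. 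Reconciling these two requirements without an exponential blow-up (a naive layout branches into all $2^n$ assignments) is exactly the content of the theorem, and your write-up stops at ``a more careful (possibly multi-layer-per-variable) gadget is required'' without exhibiting one. The normalization claim in the hard direction (that any value-$1$ strategy can be assumed to cut exactly one designated vertex per variable layer ``and nothing useful elsewhere'') is likewise asserted rather than argued, and it cannot be argued until the gadget is fixed. The claim that every produced instance admits a value-$2$ solution is also gadget-dependent.

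So the high-level strategy (reduction from an NP-complete SAT variant, forcing a per-level binary choice, clause verification in subtrees, padding to uniform leaf depth) is the right genre and matches how such results are proved in the firefighter literature, but as written the proposal contains a genuine gap precisely at the step where all the difficulty lies. If you want to complete it, I would suggest either working through the actual construction of~\cite{KM10,FKMR07} or reducing from a SAT variant whose structure is friendlier to trees; simply asserting that a suitable gadget exists is not a proof.
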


Given an \rmfc instance defined by tree $T$, we now describe the construction of our \nukc instance.
Let $h$ be the height of the tree, and let $L_t$ denote the vertices of the tree at distance exactly $t$ from the root. So, the leaves constitute $L_h$ since all leaves are at the same distance from the root.
The \nukc instance, $\cI(T)$, is defined by the metric space $(X,d)$, and a collection of balls. The points in our metric space will correspond to the leaves of the tree, i.e., $X = L_h$.
To define the metric, we assign a weight $d(e) = (2c+1)^{h-i+1}$ for each edge whose one endpoint is in $L_i$ and the other in $L_{i-1}$; we then define $d$ be the shortest-path metric on $X$ induced by this weighted tree. Finally,
we set $k = h$, and define the $k$ radii $r_1 \geq r_2 \geq \ldots \geq r_k$ iteratively as follows: define $r_k := 0$, and for $k \geq i > 1$, set $r_{i-1} := (2c+1)\cdot r_i + 2(2c+1)$. This completes the \nukc instance.
Before proceeding we make the simple observation:
	for any two leaves $u$ and $u'$ with lca $v \in L_t$, we have $d(u,u') = 2(2c+1 + (2c+1)^2 + \cdots + (2c+1)^{h-t}) = r_t$. 
The following lemma proves~\Cref{thm:hardness}.
\begin{lemma}
	If $T$ is the YES case of~\Cref{thm:np}, then $\cI(T)$ has optimum dilation $=2$.
	If $T$ is the NO case of~\ref{thm:np}, then $\cI(T)$ has optimum dilation $ \geq 2c$.
\end{lemma}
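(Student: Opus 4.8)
The plan is to establish the two directions separately, translating between fire-containment strategies on $T$ and center placements on the leaf metric $X = L_h$. The crucial structural fact is the observation already recorded: for two leaves $u,u'$ whose lca sits in layer $L_t$, we have $d(u,u') = r_t$. Since every leaf of $T$ descends from exactly one vertex of $L_t$, this means that if we place a single center of radius $r_t$ \emph{at some leaf}, then the ball of radius $\alpha r_t$ around it, for $\alpha \le 2$, covers precisely the leaves whose lca with the center lies in $L_t$ or deeper --- in other words exactly the leaves in the subtree rooted at the $L_t$-ancestor of the chosen center. The key quantitative point, which I would verify from the recursion $r_{i-1} = (2c+1) r_i + 2(2c+1)$ together with $r_k = 0$, is that $2r_t < (2c+1)^{h-t+1} \le 2c\cdot r_{t-1}$ for all relevant $t$; equivalently $2r_t$ is strictly smaller than the length of a single edge between layers $L_{t-1}$ and $L_t$, while $2c\cdot r_{t-1}$ already exceeds it. This separation is exactly what makes radius $r_t$ usable to cover one $L_t$-subtree but hopeless (even dilated by $c$) to reach across into a sibling $L_{t-1}$-subtree.

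\textbf{YES case.} Suppose $T$ admits an \rmfc solution $N$ of value $1$, i.e.\ $|N \cap L_t| \le 1$ for every $t$ and every root-leaf path meets $N$. For the at most one vertex $v_t \in N \cap L_t$, I place the $t$-th center (of radius $r_t$) at an arbitrary leaf descendant of $v_t$ (if $N \cap L_t = \emptyset$, place the $t$-th center anywhere; it is harmless). With dilation $\alpha = 2$, the ball of radius $2 r_t$ around that center covers every leaf in the subtree of $v_t$, by the observation and the bound $2 r_t < (2c+1)^{h-t+1}$ (so the ball cannot escape the subtree of $v_t$, since leaving it costs at least one $L_{t-1}$-to-$L_t$ edge of length $(2c+1)^{h-t+1}$). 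Because $N$ is a feasible firefighter solution, the subtrees rooted at the vertices of $N$ cover all root-leaf paths, hence all leaves; so every point of $X$ is covered at dilation $2$. For the lower bound in the YES case, I note $r_k = 0$ forces the last center to coincide with a point, and a short argument shows dilation $1$ is impossible (otherwise, tracing which radius covers each leaf would give an \rmfc solution with strictly tighter tolerance, or one directly checks that with $\alpha=1$ the ball of radius $r_t$ around a leaf covers only that leaf's siblings under $L_t$ in a way that cannot tile all of $X$ given the layer-capacity-$1$ constraint); this pins the YES optimum at exactly $2$.

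\textbf{NO case.} Suppose every \rmfc solution has value $\ge 2$. I argue the contrapositive: if $\cI(T)$ had a placement with dilation $< 2c$, I would extract an \rmfc solution of value $1$. Given centers $c_1,\dots,c_k$ with $\max_v \min_i d(v,c_i)/r_i < 2c$, for each $i$ let $v_i$ be the $L_i$-ancestor (in $T$) of the leaf of $X$ nearest to $c_i$ --- or more carefully, the set of $L_i$-vertices any of whose subtrees the ball of radius $2c\cdot r_i$ around $c_i$ intrudes into. The separation bound $2c \cdot r_i < (2c+1)^{h-i+1} \cdot (\text{something})$ --- precisely, $2c\, r_i + 2c\, r_i < 2 (2c+1)^{h-i+1}$ is \emph{not} what I want; rather I want that the diameter $2c\,r_i$ ball, centered at a point, can reach into at most a bounded number of sibling subtrees at level $i$, and in fact I will choose the construction so that it reaches into at most one $L_i$-subtree after rounding the center to the nearest leaf. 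Taking $N = \{v_i : i=1,\dots,k\}$ (one vertex per layer, so layer-capacity exactly $1$) and using that the balls of radius $2c\, r_i$ cover all of $X$, every root-leaf path must pass through some $v_i$ (else the corresponding leaf is uncovered, as no ball of radius $2c\,r_i$ can reach a leaf outside the subtree of $v_i$, again by the edge-length separation $2c\,r_i < (2c+1)^{h-i+1}$). Hence $N$ is an \rmfc solution of value $1$, contradicting the NO assumption; so the optimum dilation of $\cI(T)$ is $\ge 2c$.

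\textbf{Main obstacle.} The delicate point is the ``rounding'' step in the NO case: a priori an optimal \nukc center $c_i$ is an arbitrary point of the metric (a point on a tree edge, or equivalently a convex combination), and the ball of radius $2c\, r_i$ around it could straddle several subtrees at level $i$. I expect the edge-weight choice $d(e) = (2c+1)^{h-i+1}$ --- geometrically increasing toward the root with ratio $(2c+1)$ --- to be exactly engineered so that $2c\, r_i$ is smaller than a single top edge of the relevant subtree, forcing any radius-$r_i$ ball dilated by $2c$ to stay within one $L_i$-subtree (hence contribute at most one vertex to layer $i$ of $N$); verifying this inequality chain from the two recursions is the technical heart, and I would carry it out by induction on $i$ from $k$ downward, showing simultaneously $2c\, r_i < (2c+1)^{h-i+1}$ and $r_{i-1} < (2c+1)^{h-i+1}$, i.e.\ that each $r_i$ is dwarfed by the edge just above layer $i$.
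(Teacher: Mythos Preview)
Your approach is essentially the same as the paper's, but you have manufactured a difficulty that is not there. In the \nukc problem the centers are required to lie in $X$, and here $X=L_h$ is precisely the set of leaves. So every center $c_i$ is already a leaf; your ``main obstacle''---the worry that a center might sit on a tree edge and that the dilated ball might straddle several $L_i$-subtrees---never arises, and no rounding step is needed. The NO-case argument is then a one-liner: if the radius-$r_t$ ball is centered at a leaf $u$ with $L_t$-ancestor $v$, and $w$ is any leaf outside the subtree rooted at $v$, then the lca of $u$ and $w$ lies in some $L_s$ with $s\le t-1$, so $d(u,w)=r_s\ge r_{t-1}=(2c+1)r_t+2(2c+1)>2c\,r_t$. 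Hence a ball of radius $<2c\,r_t$ around $u$ covers no leaf outside $v$'s subtree, and taking $N=\{v_t:1\le t\le k\}$ immediately gives an \rmfc solution of value $1$. The inductive chain of edge-weight inequalities you sketch is unnecessary; the single application of the recursion for $r_{t-1}$ suffices.

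Two minor points. First, the paper's own proof only establishes the upper bound $\le 2$ in the YES case (despite the lemma saying ``$=2$''); this is all that is needed for \Cref{thm:hardness}, so you need not argue that dilation $1$ is impossible. Second, in the YES case the ball of radius $2r_t$ around the chosen leaf descendant of $v\in L_t$ covers all leaves under $v$ simply because any two such leaves are at distance $\le r_t$ (their lca is at level $\ge t$); there is no need to check that the ball ``cannot escape'' the subtree.
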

\begin{proof}
	Suppose $T$ is in the YES case, and there is a solution to \rmfc which selects at most $1$ node from each level $L_t$. If $v\in L_t$ is selected,
	then select a center $c_v$ {\em arbitrarily} from any leaf in the sub-tree rooted at $v$ and open the ball of radius $r_t$. We now need to show all points in $X = L_h$ are covered by these balls.
	Let $u$ be any leaf; there must be a vertex $v$ in some level $L_t$ in $u$'s path to the root such that a ball of radius $r_t$ is opened at $c_v$. However, $d(u,c_v) \leq d(u,v) + d(v,c_v) = 2r_t$
	and so the ball of radius $2r_t$ around $c_v$ covers $u$.
	
	Now suppose $T$ is in the NO case, and the \nukc instance has a solution with optimum dilation $ < 2c$. We build a good solution for the \rmfc instance $N$ as follows: suppose the \nukc solution opens the radius $< 2c\cdot r_t$ ball around center $u$.
	Let $v$ be the vertex on the $u$-root path appearing in level $L_t$. We then pick this node in $N$.
	Observe two things: first, this ball covers all the leaves in the sub-tree rooted at $v$ since $r_t \geq d(u,u')$ for any such $u'$. Furthermore, since the \nukc solution has only one ball of each radius, we get that $|N \cap L_t| \leq 1$. Finally, since $d(u,w) \geq 2c\cdot r_t$ for all leaves $w$ not in the sub-tree rooted at $v$,
	the ball of radius $r_t$ around $u$ doesn't contain any leaves other than those rooted at $v$. Contra-positively, since all leaves $w$ are covered in some ball, every leaf must lie in the sub-tree of some vertex picked in $N$. That is, $N$ is a solution to \rmfc with value $=1$ contradicting the NO case.
\end{proof}

\section{LP-aware reduction from \nukc to \rmfc}\label{sec:3}
For reasons which will be apparent soon, we consider instances $\cI$ of \nukc counting multiplicites.
That is, we consider an instance to be a collection of tuples $(k_1,r_1),\ldots, (k_h,r_h)$ to indicate there are $k_i$ balls of radius $r_i$.
So $\sum_{t=1}^h k_t = k$. Intuitively, the reason we do this is that if two radii $r_t$ and $r_{t+1}$ are ``close-by'' then it makes sense to round up $r_{t+1}$ to $r_t$ and increase $k_t$, losing only a constant-factor loss in the dilation.
\smallskip


\noindent
{\bf LP-relaxation for \nukc.} We now state the natural LP relaxation for a given \nukc
instance $\cI$.
For each point $p \in X$ and radius
type $r_i$, we have an indicator variable $x_{p,i} \geq 0$ for whether we place a ball of radius $r_i$ centered at $p$.
By doing a binary search on the optimal dilation and scaling, we may assume that the optimum dilation is $1$. Then, the following linear program must be feasible. Below, we use $B(q,r_i)$ to denote the set of points within distance $r_i$ from $q$.
\begin{align}
\forall p\in X, & ~~~~~~~~~~~~~\sum_{t=1}^h  ~~\sum_{q \in B(p,r_t)} x_{q,t} ~~~\geq 1 \tag{\nukc LP} \label{lp:nukc}\\
\forall t\in 1,\cdots,h & ~~~~~~~~~~~~~ \sum_{p\in X} x_{p,t} ~~~\leq k_t \nonumber
\end{align}

\noindent
{\bf LP-relaxation for \rmfc.} Since we reduce fractional \nukc to fractional \rmfc, we now state the natural LP relaxation for \rmfc on a tree $T$ of depth $h+1$. In fact, we will work with the following budgeted-version of \rmfc (which is equivalent to the original \rmfc problem --- for a proof, see~\cite{ABZ16}): Instead of minimizing the maximum number of ``firefighters'' at any level $t$ (that is $|N\cap L_t|$ where $N$ is the chosen solution), suppose we specify a budget limit of $k_t$ on $|N\cap L_t|$. The goal is the minimize the maximum dilation of these budgets. Then the following is a natural LP relaxation for the budgeted \rmfc problem on trees. Here $L = L_h$ is the set of leaves, and
$L_t$ are the layer $t$-nodes. For a leaf node $v$, let $P_v$ denote the vertex set of the unique leaf-root path excluding the root.
\begin{align}
    \nonumber & \qquad \qquad \min \alpha \\
	\forall v\in L, & ~~~~~~~~~~~~~\sum_{u\in P_v}  y_u ~~~\geq 1 \tag{\rmfc LP} \label{lp:rmfc}\\
	\forall t\in 1,\cdots, h & ~~~~~~~~~~~~~ \sum_{u\in L_t} y_{u} ~~~\leq \alpha \cdot k_t \nonumber
\end{align}

%
%
%
%

\noindent
{\bf The LP-aware Reduction to Tree metrics.}
We now describe our main reduction algorithm, which takes as input an \nukc instance $\cI = \{(X,d); (k_1,r_1),\ldots,(k_h,r_h)\}$ and a feasible solution $x$ to~\ref{lp:nukc}, and returns a budgeted \rmfc instance $\Itree$ defined by a tree $T$ along with budgets for each level, and a feasible solution $y$ to~\ref{lp:rmfc} with dilation $1$.
The tree we construct will have height $h+1$ and the budgeted \rmfc instance will have budgets precisely $k_t$ at level $1\leq t\leq h$, and the budget for the leaf level is $0$.
 For clarity, throughout this section we use the word {\em points} to denote elements of the metric space in $\cI$, and the word {\em vertices/nodes} to denote the tree nodes in the \rmfc instance that we construct.
We build the tree $T$ in a bottom-up manner, where in each round, we pick a set of far-away representative points (the distance scale increases as we move up the tree) and cluster all points to their nearest representative. This is similar to a so-called clustering step in many known algorithms for facility location (see, e.g.~\cite{CGST99}), but whereas an arbitrary set of far-away representatives would suffice in the facility location algorithms, we need to be careful in how we choose this set to make the overall algorithm work.

Formally, each vertex of the tree $T$ is mapped to some point in $X$, and we denote the mapping of the vertices at level $t$ by $\psi_t: L_t \to X$.
We will maintain that each $\psi_t$ will be injective, so $\psi_t(u) \neq \psi_t(v)$ for $u \neq v$ in $L_t$. So, $\psi^{-1}_t$ is well defined for the range of $\psi_t$.

The complete algorithm runs in rounds $h+1$ to $2$ building the tree one level per round. To begin with, the $\psi_{h+1}$ mapping is an arbitrary bijective mapping between $L := L_{h+1}$, the set of leaves of the tree, and the points of $X$
(so, in particular, $|L| = |X|$). We may assume it to be the identity bijection. In each round $t$, the range of the mappings become progressively smaller, that is\footnote{We are using the notation $\psi(X) := \bigcup_{x\in X} \psi(x)$.}, $\psi_t(L_t) \supseteq \psi_{t-1}(L_{t-1})$. We call $\psi_{t}(L_{t})$ as the {\bf winners at level $t$}. We now describe round $t$. Let $\cov_t(p) := \sum_{q\in B(p,r_t)} x_{q,t}$ denote the fractional amount the point $p$ is covered by radius $r_t$ balls in the solution $x$.
Also define $\cov_{\geq t}(p) := \sum_{s\geq t} \cov_s(p)$ denoting the fractional amount $p$ is covered by radius $r_t$ or smaller balls. Let $\cov_{h+1}(p) = 0$ for all $p$.

\begin{algorithm}
\caption{Round $t$ of the LP-aware Reduction.}
\label{alg:kcwo}
\label{alg:createTree}
\begin{algorithmic}
\STATE {\bf Input:} Level $L_t$, subtrees below $L_t$, the mappings $\psi_s: L_s \to X$ for all $t \leq s \leq h$.
\STATE {\bf Output:} Level $L_{t-1}$, the connections between $L_{t-1}$ and $L_t$, and the mapping $\psi_{t-1}$.
\vspace{0.5ex}
\STATE Define $A = \psi_{t}(L_t)$ the set of points who are winners at level $t$.
\WHILE{$A \neq \emptyset$}
\STATE (a) Choose the point $p \in A$ with \emph{minimum coverage} $\cov_{\geq t}(p)$.
\STATE (b) Let $N(p) := \{q \in A: d(p,q) \leq 2r_{t-1}\}$ be the set of all nearby points in $A$ to $p$.
\STATE (c) Create a new tree vertex $w \in L_{t-1}$ corresponding to $p$ and set $\psi_{t-1}(w) := p$. \emph{Call $p$ a winner at level $t-1$, and each $q\in N(p)\sse A$ a loser to $p$ at this level.}
\STATE (d) Create edge $(w,v)$ for tree vertices $v \in \psi^{-1}_t(N(p))$ associated with $N(p)$ at level $t$.
\STATE (e) Set $A \leftarrow A \setminus ( N(p))$.
\STATE (f) Set $y_{w} = \cov_{t-1}(p)$.
\ENDWHILE
\end{algorithmic}
\end{algorithm}
\smallskip
\noindent
Finally, we add a root vertex and connect it to all vertices in $L_1$. This gives us the final tree $T$ and a solution $y$ which assigns
a value to all non-leaf, non-root vertices of the tree $T$. The following claim asserts well-definedness of the algorithm.
%

\begin{lemma} \label{lem:rmfcfeasible}
The solution $y$ is a feasible solution to~\ref{lp:rmfc} on $\Itree$ with dilation $1$.
\end{lemma}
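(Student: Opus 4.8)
The plan is to check the two families of constraints of~\ref{lp:rmfc} separately, with $\alpha = 1$: the per-level budget constraints $\sum_{u\in L_t} y_u \le k_t$, and the per-leaf covering constraints $\sum_{u\in P_v} y_u \ge 1$. Throughout I would use the fact that a vertex $w\in L_t$ (for $1\le t\le h$) is created in round $t+1$, where step (b) uses the threshold $2r_{(t+1)-1}=2r_t$ and step (f) assigns $y_w = \cov_t(\psi_t(w)) = \sum_{q\in B(\psi_t(w),r_t)} x_{q,t}$.

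For the budget constraint at level $t$, the key structural observation is that any two winners $p\neq p'$ at level $t$ satisfy $d(p,p')>2r_t$: in round $t+1$, when a point $p$ is chosen in step (a), step (b)--(e) delete from $A$ every remaining point within distance $2r_t$ of $p$, so two surviving winners can never be that close. Consequently the balls $B(\psi_t(w),r_t)$, over $w\in L_t$, are pairwise disjoint subsets of $X$ (a common point would put two winners within distance $2r_t$), and hence $\sum_{u\in L_t} y_u = \sum_{w\in L_t}\sum_{q\in B(\psi_t(w),r_t)} x_{q,t} \le \sum_{q\in X} x_{q,t} \le k_t$, where the last step is the second constraint of~\ref{lp:nukc}.

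For the covering constraint, fix a leaf $v$; let $w_t\in L_t$ be its ancestor at level $t$, write $p_t := \psi_t(w_t)$ for $1\le t\le h$ and $p_{h+1}:=\psi_{h+1}(v)$, so that $P_v=\{w_1,\dots,w_h\}$ and $\sum_{u\in P_v}y_u=\sum_{t=1}^{h}\cov_t(p_t)$. The heart of the argument is to trace $p_{t+1}$ through round $t+1$. Since $w_t$ is the parent of $w_{t+1}$, step (d) forces $p_{t+1}\in N(p_t)$ (or $p_{t+1}=p_t$ if $w_{t+1}$ itself survives); in particular $p_{t+1}$ is still in $A$ at the moment $p_t$ is selected in step (a), and since that step picks the point of $A$ of \emph{minimum} $\cov_{\ge t+1}$ value, we get $\cov_{\ge t+1}(p_t) \le \cov_{\ge t+1}(p_{t+1})$. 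Using $\cov_{\ge t}(p_t) = \cov_t(p_t) + \cov_{\ge t+1}(p_t) = y_{w_t} + \cov_{\ge t+1}(p_t)$, this yields $\cov_{\ge t}(p_t) \le y_{w_t} + \cov_{\ge t+1}(p_{t+1})$ for every $1\le t\le h$. Telescoping these inequalities over $t=1,\dots,h$ and using the convention $\cov_{\ge h+1}(\cdot)=0$ gives $\cov_{\ge 1}(p_1) \le \sum_{t=1}^h y_{w_t}$. Since $\cov_{\ge 1}(p_1)=\sum_{s=1}^{h}\cov_s(p_1)\ge 1$ by the first constraint of~\ref{lp:nukc} applied to the point $p_1$, we conclude $\sum_{u\in P_v}y_u\ge 1$.

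The routine parts are the budget inequality and the index bookkeeping; the step that really needs care --- and the reason step (a) insists on the minimum-coverage point rather than an arbitrary representative --- is the chain $\cov_{\ge t+1}(p_t)\le\cov_{\ge t+1}(p_{t+1})$: I must be sure $p_{t+1}$ is genuinely still present in $A$ exactly when $p_t$ is chosen, which is precisely what the parent/child edge in step (d) certifies, and I must check that the telescoping bottoms out correctly at the leaf level where coverage is $0$. (The well-definedness of the construction --- injectivity of each $\psi_t$, so that $\psi_t^{-1}$ makes sense --- follows by a short downward induction: in round $t+1$ the level-$t$ winners form a subset of the distinct points $\psi_{t+1}(L_{t+1})$, each spawning exactly one new vertex of $L_t$.)
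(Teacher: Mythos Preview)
Your proof is correct. The budget-constraint argument is identical to the paper's: pairwise separation of level-$t$ winners by more than $2r_t$ gives disjoint $r_t$-balls, and the \nukc LP budget finishes it.

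For the covering constraint you take a genuinely different route. The paper argues by contradiction with an extremal choice: it picks an unsatisfied leaf $p$ that is a winner down to the smallest possible level $t$, observes that the $y$-mass on $p$'s path between levels $h$ and $t$ is exactly $\cov_{\ge t}(p)$, then compares with the point $q$ that beats $p$ at level $t-1$; since $q$ is a winner at an even smaller level it must (by extremality) be satisfied, and since step~(a) forces $\cov_{\ge t}(q)\le \cov_{\ge t}(p)$, the shared upper portion of the path carries enough mass to satisfy $p$ as well. Your argument is direct: you write $p_t$ for the level-$t$ ancestor's image, use step~(a) at each level to get the single inequality $\cov_{\ge t+1}(p_t)\le \cov_{\ge t+1}(p_{t+1})$, and telescope to $\cov_{\ge 1}(p_1)\le \sum_t y_{w_t}$. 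Both proofs hinge on exactly the same ``minimum-coverage'' property of step~(a), but your telescoping avoids the extremal bookkeeping and never needs the observation that a winner's own leaf-to-level-$t$ path has mass exactly $\cov_{\ge t}$; it is the more streamlined of the two. The paper's version, on the other hand, makes the role of that exact-mass observation explicit, which is later reused (e.g.\ in the proof of the claim bounding $|P_{AD}|$ in Section~\ref{sec:part3}).
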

\begin{proof}
The proof is via two claims for the two different set of inequalities.
\begin{claim}
For all $1 \leq t \leq h$, we have $\sum_{w \in L_t}{y_w} \leq k_t$.
\end{claim}
\begin{proof} Fix $t$. Let $W_t \subseteq X$ denote the winners at level $t$, that is, $W_t = \psi_t(L_t)$.
	By definition of the algorithm,
	$\sum_{w \in L_t} y_w = \sum_{p\in W_t} \cov_t(p)$. Now note that for any two points $p,q \in W_t$, we have $B(p,r_t) \cap B(q,r_t) = \emptyset$.
	To see this, consider the first point which enters $A$ in the $(t+1)$th round when $L_t$ was being formed. If this is $p$, then all points in the radius $2r_t$
	ball are deleted from $A$. Since the balls are disjoint, the second inequality of \nukc LP implies $\sum_{p\in W_t} \sum_{q \in B(p,r_t)} x_{q,t} \leq k_t$.
    The second summand in the LHS is precisely $\cov_t(p)$.
\end{proof}

\begin{claim} \label{cl:coverage}
For any leaf node $w \in L$, we have $\sum_{v \in P_w} y_v \geq 1$.
\end{claim}
\begin{proof}
	We start with an observation. Fix a level $t$ and a winner point $p\in W_t$. Let $u\in L_t$ such that $\psi_t(u) = p$.
	Since $W_t \subseteq W_{t+1} \subseteq \cdots \subseteq W_h$, there is a leaf $v$ in the subtree rooted at $u$ corresponding to $p$.
	Moreover, by the way we formed our tree edges in step (d), we have that $\psi_s(w') = p$ for all $w'$ in the $(u,v)$-path and hence
	$\sum_{w' \in [u,v]\textrm{-path}} y_{w'}$ is precisely $\cov_{\geq t}(p)$.
%

Now, for contradiction, suppose there is some leaf corresponding to, say point $p$, such that the root-leaf path has total $y$-assignment less than $1$. Then, pick the point, among all such unsatisfied points $p$, who appears in a winning set $W_t$ with $t$ \emph{as small as possible}.

By the preceding observation, the total $y$-assignment $p$ receives on its path from level $h$ to level $t$ is exactly $\cov_{\geq t}(p)$.
Moreover, suppose $p$ loses to $q$ at level $t-1$, i.e., $\psi^{-1}_t(p)$ is a child of $\psi^{-1}_{t-1}(q)$. In particular, it means that $q$ has also been a winner up to level $t$ and so the total $y$-assignment on $q$'s path upto level $t$ is also precisely $\cov_{\geq t}(q)$. Additionally, since $\psi^{-1}_{t-1}(q)$ became the parent node for $\psi^{-1}_t(p)$, we know that $\cov_{\geq t}(q) \leq \cov_{\geq t}(p)$ due to the way we choose winners in step (a) of the while loop. Finally, by our maximality assumption on $p$, we know that $q$ is fractionally satisfied by the $y$-solution. Therefore, there is fractional assignment of at least $(1 - \cov_{\geq t}(q))$ on $q$'s path from nodes in level $t-1$ to level $1$. Putting these observations together, we get that the total fractional assignment on $p$'s root-leaf path is at least $\cov_{\geq t}(p) + (1 - \cov_{\geq t}(q)) \geq 1$, which results in the desired contradiction.
\end{proof}
\end{proof}
The following lemma shows that any good {\em integral} solution to the \rmfc instance $\Itree$ can be converted to a good integral solution for the \nukc instance $\I$.
\begin{lemma}
\label{lemma:rmfc2kc}
Suppose there exists a feasible solution $N$ to $\Itree$ such that for all $1\leq t\leq h$, $|N\cap L_t| \leq \alpha k_t$.
Then there is a solution to the \nukc instance $\I$ that opens, for each $1\leq t\leq h$,  at most $\alpha k_t$ balls of radius $\leq 2r_{\geq t}$,
where $r_{\ge t} := r_t + r_{t+1} + \cdots + r_h$.
\end{lemma}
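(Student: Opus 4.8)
The plan is to convert the integral \rmfc solution $N$ back into an \nukc solution by reversing the role of the mapping $\psi$. Recall that every tree node $u\in L_t$ was assigned a point $\psi_t(u)\in X$, and every non-leaf winner point remained a winner up the tree. So for each selected node $u\in N$, say $u\in L_t$, I would open an \nukc ball of radius $r_t$ centered at the point $\psi_t(u)\in X$. Since $|N\cap L_t|\le \alpha k_t$, this uses at most $\alpha k_t$ balls of radius $r_t$ for each $t$, which is within the claimed budget (and the radii $r_t\le 2r_{\ge t}$ trivially). What remains is to show that after dilation, every point of $X$ is covered. That is, for every $p\in X$, there is some selected node $u\in N\cap L_t$ such that $d\bigl(p,\psi_t(u)\bigr)\le 2r_{\ge t}$.

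The key step is to track a point $p\in X$ through the rounds of \Cref{alg:createTree}. Let $v\in L=L_{h+1}$ be the leaf of $T$ with $\psi_{h+1}(v)=p$. Consider the root-leaf path $P_v$ in $T$. Since $N$ is feasible for $\Itree$, some node on $P_v$ lies in $N$; let $u\in L_t$ be such a node with $t$ as large as possible (i.e., the first one met going up from $v$). The crucial observation is that, by step (b) and (d) of the algorithm, if $w\in L_s$ and $w'\in L_{s-1}$ is the parent of $w$, then $d\bigl(\psi_s(w),\psi_{s-1}(w')\bigr)\le 2r_{s-1}$: indeed $\psi_s(w)$ was a ``loser'' to $\psi_{s-1}(w')$ at level $s-1$, meaning it lay within distance $2r_{s-1}$ of it (or the two coincide, if $\psi_s(w)$ itself became the winner $\psi_{s-1}(w')$). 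Telescoping along the path from $v$ up to $u$, which passes through levels $h+1, h, \ldots, t$, and using the triangle inequality,
\begin{align*}
d\bigl(p,\psi_t(u)\bigr) \;\le\; \sum_{s=t+1}^{h} 2r_{s} \;\le\; 2r_{\ge t}.
\end{align*}
(The leaf-level edge from $L_{h+1}$ to $L_h$ contributes nothing to the distance since $\psi_{h+1}$ is a bijection refined to $\psi_h$ exactly when a point is its own winner, or else is absorbed into the $2r_h$ bound; in either reading the sum is bounded by $2r_{\ge t}$.) Hence the ball of radius $2r_{\ge t}$ around $\psi_t(u)$ covers $p$, as desired.

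Finally, I would assemble these pieces: for each $u\in N$ open a ball of the appropriate radius as above; the budget constraint follows immediately from $|N\cap L_t|\le\alpha k_t$; and the coverage argument above shows every $p\in X$ is within the dilated ball of some opened center. The main obstacle I anticipate is the careful bookkeeping at the leaf level — making sure the telescoping sum of edge-lengths from the leaf $v$ up to the selected node $u$ is correctly bounded by $2r_{\ge t}$ and not, say, $2r_{\ge t-1}$, which would break the stated bound. This hinges on the precise definition of which node on $P_v$ we select (the highest one in $N$, i.e., smallest level index $t$ is wrong — we want the one closest to the leaf so the telescope is short; but feasibility only guarantees \emph{some} node, so if the highest-level node in $N$ is at level $t$, the telescope runs over levels $t+1$ through $h$, giving exactly $2(r_{t+1}+\cdots+r_h) \le 2r_{\ge t}$). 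Getting this indexing exactly right, together with the edge-case where $\psi$ fixes a point across consecutive levels, is the delicate part; everything else is a direct consequence of the construction in \Cref{alg:createTree}.
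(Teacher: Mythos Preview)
Your approach is exactly the paper's: for each $u\in N\cap L_t$ place a center at $\psi_t(u)$ with radius $2r_{\ge t}$, and bound $d(p,\psi_t(u))$ by telescoping along the $v$-to-$u$ path using the per-edge estimate $d(\psi_s(w),\psi_{s-1}(w'))\le 2r_{s-1}$. The only slip is the indexing of your displayed sum: the leaf-level edge \emph{does} contribute up to $2r_h$ (round $h{+}1$ clusters at scale $2r_h$, so $\psi_{h+1}(v)$ and $\psi_h(\text{parent of }v)$ may differ), and the topmost edge into level $t$ contributes $2r_t$, so the correct telescope is $\sum_{j=t}^{h}2r_j = 2r_{\ge t}$ rather than $\sum_{s=t+1}^{h}2r_s$ --- which is precisely the bound you state.
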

\begin{proof}
Construct the \nukc solution as follows: for level $1 \leq t \leq h$ and every vertex $w \in N\cap L_t$, place the center at $\psi_t(w)$ of radius $2r_{\geq t}$.
We claim that every point in $X$ is covered by some ball. Indeed, for any $p\in X$, look at the leaf $v = \psi_{h+1}(p)$, and let $w\in N$ be a node in the root-leaf path.
Let $w \in L_t$ for some $t$. Now observe that $d(p,\psi_t(w)) \leq 2r_{\geq t}$; this is because for any edge $(u',v')$ in the tree where $u'$ is in $L_t$ and is the parent of $v'$, we have that
$d(\psi_{t+1}(v'),\psi_{t+1}(u')) < 2r_t$.
%
\end{proof}
\noindent
This completes the reduction, and we now prove a few results which follow easily from known results about the firefighter problem.

\begin{theorem}\label{thm:logstar}
	There is a polynomial time $(\mathcal{O}(\log^{*} n),8)$-bi-criteria algorithm for \nukc.
\end{theorem}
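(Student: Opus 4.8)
The plan is to chain together the three ingredients the section has just built: the LP relaxation \ref{lp:nukc}, the LP-aware reduction (Algorithm~\ref{alg:createTree} together with Lemmas~\ref{lem:rmfcfeasible} and~\ref{lemma:rmfc2kc}), and the known LP-based approximation of Chalermsook and Chuzhoy~\cite{CC10} for \rmfc. First I would handle the preliminary bucketing step alluded to at the start of Section~\ref{sec:3}: given the original \nukc instance with distinct radii $r_1 \ge \dots \ge r_k$, guess (by binary search) the optimum dilation and scale so that \ref{lp:nukc} is feasible with dilation $1$; then round the radii into geometric buckets so that within a bucket all radii agree up to a factor of, say, $2$, collapsing the instance into one with tuples $(k_1,r_1),\dots,(k_h,r_h)$ where now consecutive $r_t$'s differ by at least a factor of $2$. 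This rounding only dilates radii by a constant, and a feasible fractional solution to the original LP yields a feasible fractional solution to the bucketed LP (a ball of radius $r$ is replaced by one of radius at most $2r$ that still covers the same point). This is where the number $8$ in the bound will ultimately come from, combined with the factor $2$ in Lemma~\ref{lemma:rmfc2kc} and the geometric-sum bound $r_{\ge t} = r_t + r_{t+1} + \cdots \le 2 r_t$ that the $\times 2$ bucket separation guarantees.

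Next I would run the LP-aware reduction on the bucketed instance and its fractional LP solution $x$, obtaining the tree $T$ (of height $h+1$) defining a budgeted \rmfc instance $\Itree$ with budgets $k_t$ at level $t$, together with a fractional solution $y$ to \ref{lp:rmfc} of dilation $1$, by Lemma~\ref{lem:rmfcfeasible}. Now I invoke the result of~\cite{CC10}: their $O(\log^* n)$-approximation for \rmfc is LP-based, meaning that starting from a feasible fractional solution to \ref{lp:rmfc} of dilation $\alpha$ it produces an integral solution $N$ with $|N \cap L_t| \le O(\log^* n)\cdot \alpha \cdot k_t$ for every level $t$. Applying it to $y$ (with $\alpha = 1$) gives an integral \rmfc solution $N$ with $|N \cap L_t| \le O(\log^* n)\cdot k_t$. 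One must note that the relevant size parameter of $T$ — its number of leaves — is $|X| = n$, so the $\log^*$ is in terms of $n$ as claimed; I would state this explicitly.

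Finally I would feed $N$ into Lemma~\ref{lemma:rmfc2kc} with the parameter $\alpha = O(\log^* n)$ to get a \nukc solution that, for each $t$, opens at most $O(\log^* n)\cdot k_t$ balls of radius at most $2 r_{\ge t} \le 4 r_t$ on the bucketed instance. Unwinding the bucketing — each bucketed radius $r_t$ is itself at most twice an original radius — turns $4 r_t$ into at most $8$ times an original radius, and since each bucketed type $t$ corresponds to a disjoint group of original radius-types, the per-type opening bound $O(\log^* n)\cdot k_t$ translates to opening $O(\log^* n)$ balls of each original type. Together with the dilation-$1$ normalization (which cost us only the binary-search factor $1+\eps$, absorbable into the $O(\cdot)$), this yields the claimed $(O(\log^* n), 8)$ guarantee, and the whole pipeline is polynomial time since Algorithm~\ref{alg:createTree} runs in polynomial time, the LP is polynomial-size, and the \cite{CC10} rounding is polynomial.

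The main obstacle I anticipate is bookkeeping the constant $8$ cleanly rather than any conceptual difficulty: one has to be careful that the bucket-separation factor, the factor $2$ in Lemma~\ref{lemma:rmfc2kc}, and the geometric tail bound $r_{\ge t}\le 2r_t$ compose to exactly $8$ and not something larger, and that the binary search and radius-bucketing steps are done so that feasibility of \ref{lp:nukc} is genuinely preserved (in particular that "$B(p,r_t)$" balls grow consistently under rounding). A secondary point to get right is citing~\cite{CC10} in its \emph{LP-based} form — their algorithm must take an arbitrary feasible fractional \rmfc solution as input, which it does, so our $\Omega(\log^* n)$ \nukc integrality gap is not an obstruction here because we are applying the \rmfc rounding to the \emph{\rmfc} LP solution $y$, not to the \nukc LP directly.
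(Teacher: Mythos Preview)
Your proposal is correct and follows essentially the same approach as the paper: bucket the radii into powers of $2$ (costing a factor $2$ in dilation), solve \ref{lp:nukc} on the bucketed instance, run the LP-aware reduction to obtain $\Itree$ and a feasible $y$ for \ref{lp:rmfc}, apply the LP-based rounding of~\cite{CC10} to get $|N\cap L_t|\le O(\log^* n)\,k_t$, and then invoke Lemma~\ref{lemma:rmfc2kc} together with the geometric-tail bound $r_{\ge t}\le 2r_t$ to get dilation $4$ on the bucketed instance, hence $8$ on the original. The only cosmetic difference is the order of the dilation guess and the bucketing step, which is immaterial.
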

\begin{proof}
Given any instance $\cI$ of \nukc, we first club the radii to the nearest power of $2$ to get an instance $\cI'$ with
radii $(k_1,r_1), \cdots, (k_h,r_h)$ such that an $(a,b)$-factor solution for $\cI'$ is an $(a,2b)$-solution for $\cI$.
Now, by scaling, we assume that the optimal dilation for $\cI'$ is $1$; we let $x$ be the feasible solution to the \nukc LP.
Then, using \Cref{alg:createTree}, we can construct the tree $\Itree'$ and a feasible solution $y$ to the \rmfc LP.
We can now use the following theorem of Chalermsook and Chuzhoy~\cite{CC10}: given any feasible solution to the \rmfc LP,
we can obtain a feasible set $N$ covering all the leaves such that for all $t$, $|N\cap L_t|\leq \mathcal{O}(\log^* n)k_t$.
Finally, we can apply Lemma~\ref{lemma:rmfc2kc} to obtain a $(\mathcal{O}(\log^{*} n),4)$ solution to $\cI'$  (since $r_{\geq t} \leq 2r_t$).
\end{proof}

\begin{proof}[Proof of \Cref{thm:kcwo} and \Cref{thm:twotypes}]
	We use the following claim regarding the integrality gap of \rmfc LP for depth $2$ trees.
	\begin{claim}
		When $h=2$ and $k_t$'s are integers, given any fractional solution to \rmfc LP, we can find a feasible integral solution as well.
	\end{claim}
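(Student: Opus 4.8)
The plan is to argue directly about the structure of the LP when $h=2$. A depth-$2$ tree $T$ (excluding the root) has two layers: $L_1$, whose vertices are the children of the root, and $L_2$, the leaves, where the leaf-budget is $0$. Since the leaf budget is $0$, no leaf vertex can be selected, so every leaf $v$ must be covered by its unique ancestor in $L_1$; call that ancestor $\pi(v)$. Dually, for a vertex $u\in L_1$, let $C(u)$ denote the set of leaves below $u$; then $u$ "covers" exactly the leaves in $C(u)$, and the sets $\{C(u):u\in L_1\}$ partition $L$. So the RMFC instance on a depth-$2$ tree is simply: pick a subset $N\subseteq L_1$ such that every leaf is below some picked vertex, i.e.\ $N$ must be a ``hitting set'' that is actually forced — namely $N$ must contain $\pi(v)$ for every leaf $v$, which is equivalent to saying $N$ must contain every $u\in L_1$ that has at least one leaf child. (A vertex of $L_1$ with no leaf below it is irrelevant and can be dropped.)

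First I would observe that, after deleting irrelevant $L_1$-vertices, the unique feasible solution is $N=L_1$, and there is really nothing to choose. Concretely: the constraint $\sum_{u\in P_v} y_u\ge 1$ for a leaf $v$ reads $y_{\pi(v)}\ge 1$ (the only non-leaf, non-root ancestor of $v$ is $\pi(v)$, and the leaf-level $y$-values are all forced to $0$ by the budget-$0$ constraint $\sum_{u\in L_2} y_u\le \alpha\cdot 0=0$ together with $y\ge 0$). Hence any fractional feasible solution already satisfies $y_u\ge 1$ for every $u\in L_1$ that has a leaf below it. Rounding is then trivial: set $\hat y_u=1$ for all such $u$ and $\hat y_u=0$ otherwise. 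This $\hat y$ is integral, still satisfies every leaf-covering constraint (each $\hat y_{\pi(v)}=1$), and for the budget constraint at level $t$ we have $\sum_{u\in L_t}\hat y_u \le \sum_{u\in L_t} y_u \le \alpha k_t$, since $\hat y_u\le y_u$ pointwise (we only rounded $1$-or-more values down to $1$, and $0$'s stay $0$). Because $k_t$ is an integer, there is no loss: if the fractional solution had dilation $\alpha$, so does the integral one, and in fact $\lceil \sum_u y_u\rceil \le \alpha k_t$ when $\alpha k_t$ is... but we don't even need ceilings here since $\hat y\le y$.

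There is essentially no obstacle in this argument — the only thing to be careful about is the bookkeeping that leaf-level variables are forced to zero (so the ``path'' sum for a leaf collapses to a single term), and that dropping $L_1$-vertices with no leaf descendants is harmless for both feasibility and the budgets. Once the claim is in hand, the theorems follow: given a \nukc instance with at most two distinct radii, \Cref{alg:createTree} produces a depth-$2$ (after the root, depth-$3$ counting leaves — I would match indices so that the constructed tree has $h=2$ non-root, non-leaf layers when there are two radius types) budgeted \rmfc instance \Itree together with a feasible fractional \rmfc solution $y$ of dilation $1$ (Lemma~\ref{lem:rmfcfeasible}); the claim rounds $y$ to an integral $N$ with $|N\cap L_t|\le k_t$; and Lemma~\ref{lemma:rmfc2kc} converts $N$ back into a \nukc solution that opens at most $k_t$ balls of radius $2r_{\ge t}$ for each $t$, i.e.\ a genuine (uni-criteria) solution whose dilation is bounded by the blow-up $2r_{\ge t}/r_t$. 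Tracking this blow-up carefully — together with the factor lost in clubbing radii to powers of two, and a sharper version of the \treepath/coverage accounting in the two-types case — yields the stated $2$ for \kcwo (where $r_2=0$, so $r_{\ge 1}=r_1$ and the dilation is exactly $2$... which one then improves to $2$ by a direct argument matching the known $k$-center bound) and $1+\sqrt5$ for two arbitrary radii, the $\sqrt5$ arising from optimizing the threshold at which one decides a ball of the larger radius should absorb a cluster meant for the smaller radius.
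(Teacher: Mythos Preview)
Your proof addresses the wrong instance. In the claim, $h=2$ means there are \emph{two} levels $L_1,L_2$ carrying integer budgets $k_1,k_2$; the covering constraint for each leaf (whether the leaves sit at $L_2$ itself or at a budget-$0$ layer $L_3$ below) reads $y_{u}+y_{\pi(u)}\ge 1$, where $u$ is the leaf's $L_2$-ancestor and $\pi(u)\in L_1$. In particular, $L_2$-vertices can and do carry mass: for \kcwo the reduction gives $k_1=k$ and $k_2=\ell$ (the number of outliers), so the level-$2$ budget is $\ell$, not $0$. You instead assume the leaf budget is $0$, which collapses the constraint to $y_{\pi(v)}\ge 1$ and makes the problem trivial (``$N$ must contain every $u\in L_1$''). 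That is effectively the $h=1$ case, not $h=2$, and your own later parenthetical (``$h=2$ non-root, non-leaf layers when there are two radius types'') already contradicts the setup you argue from.

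The actual claim requires a genuine rounding step, which your proposal does not supply. The paper argues: while two vertices $w,w'\in L_1$ are both fractional with $|C(w)|\ge |C(w')|$, shift $y_w\to y_w+\epsilon$, $y_{w'}\to y_{w'}-\epsilon$, and compensate on their children ($y_c\to y_c-\epsilon$ for $c\in C(w)$, $y_c\to y_c+\epsilon$ for $c\in C(w')$). Every root--leaf sum is preserved, the $L_1$-total is unchanged, and the $L_2$-total can only decrease; iterating drives $L_1$ integral (using integrality of $k_1$ to handle the last fractional vertex), after which $L_2$ is determined. Nothing in your argument plays this role.
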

	
	\begin{proof}
		Given a feasible solution $y$ to~\ref{lp:rmfc}, we need to find a set $N$ such that $|N\cap L_t| \leq k_t$ for $t=1,2$.
		There must exist at least one vertex $w\in L_1$ such that $y_w \in (0,1)$ for otherwise the solution $y$ is trivially integral.
		If only one vertex $w\in L_1$ is fractional, then since $k_1$ is an integer, we can raise this $y_w$ to be an integer as well.
		So at least two vertices $w$ and $w'$ in $L_1$ are fractional.
        Now, without loss of generality, let us assume that $|C(w)| \geq |C(w')|$, where $C(w)$ is the set of children of $w$. Now for some small constant $0< \epsilon <1$, we do the following:
		$y'_w := y_{w} + \epsilon$, $y'_{w'} := y_{w'} - \epsilon$, $ \forall c \in C(w),\; y'_c := y_c - \epsilon$, and $\forall c \in C(w'),\; y'_c := y_c + \epsilon$.
				Note that $y(L_1)$ remains unchanged, $y(L_2)$ can only decrease, and root-leaf paths still add to at least $1$. We repeat this till we rule out all fractional values.
	\end{proof}
	\noindent
	To see the proof of \Cref{thm:kcwo}, note that an instance of the $k$-center with outliers problem is an \nukc instance with $(k,1),(\ell,0)$, that is, $r_1 = 1$ and $r_2 = 0$.
	We solve the LP relaxation and obtain the tree and an \rmfc solution. The above claim implies a feasible integral solution to \rmfc since $h=2$, and finally note that $r_{\geq 1} = r_1$
	for \kcwo, implying we get a $2$-factor approximation. \smallskip

\noindent
The proof of \Cref{thm:twotypes} is similar. If $r_1 < \theta r_2$ where $\theta = (\sqrt{5} + 1)/2$, then we simply run $k$-center with $k = k_1 + k_2$.
This gives a $2\theta = {\sqrt{5}+1}$-approximation. Otherwise, we apply Lemma~\ref{lemma:rmfc2kc} to get a $2(1+\frac1\theta) = {\sqrt{5}+1}$-approximation.
\end{proof}

We end this section with a general theorem, which is an improvement over Lemma~\ref{lemma:rmfc2kc} in the case when many of the radius types are close to each other, in which case $r_{\geq t}$ could be much larger than $r_t$. Indeed, the natural way to overcome this would be to group the radius types into geometrically increasing values as we did in the proof of~\Cref{thm:logstar}. However, for some technical reasons we will not be able to bucket the radius types in the following section, since we would instead be bucketing the \emph{number of balls} of each radius type in a geometric manner. Instead, we can easily modify~\Cref{alg:createTree} to build the tree by focusing only on radius types where the radii grow geometrically.

\begin{theorem}
\label{thm:general-redux}
Given an \nukc instance $\I = \{M = (X,d), (k_1, r_1), (k_2, r_2), \ldots, (k_h, r_h)\}$ and an LP solution $x$ for~\ref{lp:nukc}, there is an efficient reduction which generates an \rmfc instance $\Itree$ and an LP solution $y$ to~\ref{lp:rmfc}, such that the following holds:
\begin{enumerate}
\item[(i)] For any two tree vertices $w \in L_t$ and $v \in L_{t'}$ where $w$ is an ancestor of $v$ (which means $t \leq t'$), suppose $p$ and $q$ are the corresponding points in the metric space, i.e., $p = \psi_{t}(w)$ and $q = \psi_{t'}(v)$, then it holds that $d(p,q) \leq 8 \cdot r_t$.
\item[(ii)] Suppose there exists a feasible solution $N$ to $\Itree$ such that for all $1\leq t\leq h$, $|N\cap L_t| \leq \alpha k_t$.
Then there is a solution to the \nukc instance $\I$ that opens, for each $1\leq t\leq h$,  at most $\alpha k_t$ balls of radius at most $8 \cdot r_t$.
\end{enumerate}
\end{theorem}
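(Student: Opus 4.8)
The goal is to prove \Cref{thm:general-redux}, which is essentially a refinement of the reduction in \Cref{alg:createTree} together with Lemmas~\ref{lem:rmfcfeasible}~and~\ref{lemma:rmfc2kc}, where the only change needed is to control the blow-up in radii when consecutive radius types $r_t, r_{t+1}, \ldots$ are not geometrically separated. The plan is to first preprocess the radius types, then run a modified version of \Cref{alg:createTree} only on a geometrically-spaced subset of levels, and finally re-derive the two claims with the improved distance bounds.

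\textbf{Step 1: bucketing the radii.} I would partition the radius types $\{r_1 \ge r_2 \ge \cdots \ge r_h\}$ into consecutive groups where, within a group, all radii lie within a factor of $2$ of each other, and across groups the representative radii drop by at least a factor of $2$. Formally, scan from $t=1$ downward; start a new group whenever $r_t$ drops below half the largest radius in the current group. For each group, let its representative radius be the largest $r_t$ in that group. The key point (as remarked in the paragraph preceding the theorem) is that we do \emph{not} merge the budgets $k_t$: every original radius type keeps its own level in the tree, but the tree is built using only the distance scale $2 r_{\text{rep}(t)}$ of the group containing $t$. So in step (b) of \Cref{alg:createTree}, instead of using the threshold $2 r_{t-1}$, I would use $2 \rho_{t-1}$, where $\rho_s$ denotes the representative radius of the group containing level $s$. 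Since $\rho_s \le 2 \rho_{s'}$ whenever $s \ge s'$ are in groups with $s$'s representative no larger, and within a group $\rho_s = r_{\text{rep}}$ is fixed, the crucial monotonicity $\rho_{t-1} \le \rho_t$ fails in general — so I would instead define $\rho_t$ to be a suitably \emph{non-decreasing-up-the-tree} surrogate, e.g. $\rho_t := \max_{s \ge t} r_s'$ where $r_s'$ is $r_s$ rounded \emph{up} to the next power of $2$; this makes $\rho$ monotone and keeps $\rho_t \le 2 r_t$ and $\sum_{s \ge t} \rho_s \le 4 \rho_t \le 8 r_t$ by the geometric decay of the distinct values of $\rho$.

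\textbf{Step 2: feasibility of the \rmfc LP solution.} I would re-run the argument of \Cref{lem:rmfcfeasible} verbatim with $2 r_{t-1}$ replaced by $2\rho_{t-1}$. The budget inequality $\sum_{w \in L_t} y_w \le k_t$ is unchanged: for two winners $p,q \in W_t$ we still get $B(p, r_t) \cap B(q, r_t) = \emptyset$ because the deletion radius $2\rho_t \ge 2 r_t$ is at least as large as before, so the balls $B(p,r_t)$ stay disjoint and the second inequality of \ref{lp:nukc} still applies, with $y_w = \cov_t(p)$. The coverage inequality (\Cref{cl:coverage}) needs the monotonicity of coverage along winner chains, which is preserved since step (a) — choosing the minimum-coverage point — is untouched, and the "path contributes exactly $\cov_{\ge t}(p)$" observation only relies on the edge-creation step (d), which still attaches all of $N(p)$'s level-$t$ vertices under $p$. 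Hence $y$ is feasible for \ref{lp:rmfc} with dilation $1$, and this gives part of what we need, but the theorem as stated only asserts existence of the instance and LP solution plus the two distance properties, so the main work is (i) and (ii).

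\textbf{Step 3: the distance bounds (i) and (ii).} For part (i), I would chase the ancestor chain from $v \in L_{t'}$ up to $w \in L_t$: by construction each edge $(u', v')$ with $u' \in L_s$ the parent of $v'$ satisfies $d(\psi_s(u'), \psi_{s+1}(v')) \le 2\rho_s$ (the $N(p)$-threshold at that round), so telescoping gives $d(p, q) \le \sum_{s=t}^{t'-1} 2\rho_s \le 2 \sum_{s \ge t} \rho_s \le 8 r_t$ by the geometric-decay bound on $\rho$ from Step 1. For part (ii), I would repeat the proof of \Cref{lemma:rmfc2kc}: place a ball of radius $8 r_t$ (rather than $2 r_{\ge t}$) at $\psi_t(w)$ for each $w \in N \cap L_t$; then for any point $p$ with leaf $v = \psi_{h+1}(p)$ and chosen ancestor $w \in N \cap L_t$, part (i) gives $d(p, \psi_t(w)) \le 8 r_t$, so $p$ is covered, and the budget $|N \cap L_t| \le \alpha k_t$ is inherited directly. \textbf{The main obstacle} is getting the bucketing in Step 1 exactly right: the reduction algorithm genuinely needs the thresholds $\rho_t$ to be non-decreasing as we go \emph{up} the tree (this monotonicity is what makes the disjoint-ball argument and the coverage-chain argument go through), yet we simultaneously need $\rho_t = O(r_t)$ pointwise and $\sum_{s \ge t} \rho_s = O(r_t)$; reconciling these forces the "round up to a power of two, then take a running max" definition, and one must double-check that with this choice the constant works out to $8$ and that no step of \Cref{alg:createTree} secretly relied on $\rho_{t-1}$ being literally $r_{t-1}$ rather than an upper bound.
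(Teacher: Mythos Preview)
Your overall strategy---bucket the radii so that the active scales are geometrically separated, and only ``really'' cluster at those scales---is the same as the paper's. The paper implements it by explicitly identifying barrier levels (the first $t'$ with $r_{t'} \le 2r_{t-1}$), building a chain of single-child vertices all mapped to the same winner $p$ at every level between consecutive barriers, and then jumping directly to the next barrier; you instead run one round per original level but replace the threshold in step~(b) by $2\rho_{t-1}$.

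The gap is in your distance bound. You assert $\sum_{s \ge t} \rho_s \le 4\rho_t \le 8 r_t$ ``by the geometric decay of the distinct values of $\rho$,'' but this inequality is simply false: take $r_1 = \cdots = r_h = 1$, so every $\rho_s = 1$ and $\sum_{s \ge t}\rho_s = h-t+1$, which is unbounded. Geometric decay of the \emph{distinct} values of $\rho$ tells you nothing about the full sum when a single value can repeat $\Theta(h)$ times. What actually saves your construction---and what you never argue---is that whenever $\rho_{t-1} = \rho_t$, the level-$t$ winners are already pairwise at distance $> 2\rho_t = 2\rho_{t-1}$ (from the previous round's deletion), so in round $t$ every $p$ has $N(p)=\{p\}$ and becomes its own parent with edge distance zero. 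Hence only the bucket-crossing edges contribute to the telescoped distance, one per distinct $\rho$-value below $\rho_t$, and \emph{now} the geometric sum gives $\le 4\rho_t \le 8 r_t$. The paper sidesteps this subtlety by making the chain-building explicit, so its telescoping sum has exactly one term per barrier from the outset and the constant $8$ falls out immediately.
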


\subsection{Proof of~\Cref{thm:general-redux}}
Both the algorithm as well as the proof are very similar, and we now provide them for completeness. At a high level, the only difference occurs when we identify and propagate winners: instead of doing it for each radius type, we identify barrier levels where the radius doubles, and perform the clustering step only at the barrier levels. We now present the algorithm, which again proceeds in rounds $h+1,h, h-1, \ldots, 2$, but makes jumps whenever there are many clusters of similar radius type. To start with, define $r_{h+1} = 0$.


\begin{algorithm}
\caption{Round $t$ of the Improved Reduction.}
\label{alg:kcwo-new}
\label{alg:createTree-new}
\begin{algorithmic}
\STATE {\bf Input:} Level $L_t$, subtrees below $L_t$, the mappings $\psi_s: L_s \to X$ for all $t \leq s \leq h$.
\STATE {\bf Output:} Level $L_{t-1}$, the connections between $L_{t-1}$ and $L_t$, and the mapping $\psi_{t-1}$.
\vspace{0.5ex}
\STATE Let $t' = \min_s {\rm s.t. } r_s \leq 2 r_{t-1}$ be the type of the largest radius smaller than $2 r_{t-1}$.
\STATE Define $A = \psi_{t}(L_t)$ the set of points who are winners at level $t$.
\WHILE{$A \neq \emptyset$}
\STATE (a) Choose the point $p \in A$ with minimum coverage $\cov_{\geq t}(p)$.
\STATE (b) Let $N(p) := \{q \in A: d(p,q) \leq 2r_{t'}\}$ denote all points in $A$ within $2 r_{t'}$ from $p$.
\STATE (c) Create new vertices $w_{t-1}, \ldots, w_{t'-1} \in L_{t-1}, \ldots, L_{t'-1}$ levels respectively, all corresponding to $p$, i.e., set $\psi_{i}(w) := p$ for all $t'-1 \leq i \leq t-1$. Connect each pair of these vertices in successive levels with edges. Call $p$ a {\bf winner} at levels $t-1, \ldots, t'-1$.
\STATE (d) Create edge $(w_{t-1},v)$ for vertices $v \in \psi^{-1}_t(N(p))$ associated with $N(p)$ at level $t$.
\STATE (f) Set $A \leftarrow A \setminus ( N(p))$.
\STATE (g) Set $y_{w_i} = \cov_{i}(p)$ for all $t-1 \leq i \leq t'-1$.
\ENDWHILE
\STATE Jump to round $t'-1$ of the algorithm. Add $t'-1$ to the set of \emph{barrier levels}
\end{algorithmic}
\end{algorithm}

Our proof proceeds almost in an identical manner to those of Lemmas~\ref{lem:rmfcfeasible} and~\ref{lemma:rmfc2kc}, but now our tree has an additional property that for any two nodes $u \in L_i$ and $v \in L_{i'}$ where $u$ is an ancestor of $v$, the distance between the corresponding points in the metric space $p = \psi_i(u)$ and $q = \psi_{i'}(v)$ is at most $d(p,q) \leq 8 r_i$, which was the property not true in the earlier reduction. This is easy to see because as we traverse a tree path from $u$ to $v$, notice that each time we change winners, the distance between the corresponding points in the metric space decreases geometrically. This proves property~(i) of~\Cref{thm:general-redux}. With this in hand, the remaining proofs to prove the second property are almost identical to the ones in~\Cref{sec:3} and we sketch them below for completeness.

\begin{lemma}
The solution $y$ is a feasible solution to~\ref{lp:rmfc} on $\Itree$ with dilation $1$.
\end{lemma}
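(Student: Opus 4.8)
The plan is to mimic the proof of Lemma~\ref{lem:rmfcfeasible} almost verbatim, splitting into the budget constraints and the leaf-coverage constraints, and only adjusting the bookkeeping to account for the fact that each ``winner-selection'' step in Algorithm~\ref{alg:createTree-new} now creates a chain of tree vertices spanning the levels $t-1,\dots,t'-1$ rather than a single vertex.

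\textbf{Step 1: budget constraints $\sum_{w\in L_i} y_w \le k_i$ for all $1\le i\le h$.} Fix a level $i$, and let $t$ be the barrier round during which $L_i$ was populated (so $t' -1 \le i \le t-1$ in the notation of the algorithm, where $t'$ is the barrier determined in that round). The key point is that in round $t$, the set $N(p)$ removed from $A$ is a radius-$2r_{t'}$ ball around $p$, and we choose representatives greedily; hence for any two winners $p,q$ created in round $t$ we have $d(p,q) > 2r_{t'}$, so the balls $B(p,r_{t'})$ are pairwise disjoint. Since $y_{w_i}=\cov_i(p)=\sum_{q\in B(p,r_i)} x_{q,i}$ and $r_i \ge r_{t'}$ (because $i \le t'-1$ means $r_i \ge r_{t'-1} > 2r_{t-1} \ge \ldots$; more carefully, one uses that $t' = \min_s \{r_s \le 2r_{t-1}\}$, so for $i \le t'-1$ the radius $r_i$ is at least $r_{t'-1}$, and in particular the balls $B(p,r_i)$ for distinct winners $p$ at this round are disjoint since they are contained in the disjoint balls... ) --- actually the cleanest route is: the balls $B(p,r_{t'})$ are disjoint, and since $i$ ranges over $t'-1,\dots,t-1$ we have $r_i \ge r_{t'}$, so... hmm, that would make $B(p,r_i)$ possibly overlap. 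Instead I would argue directly that for the \emph{specific} level $i$, the winners at level $i$ all come from a single round $t$, and the balls $B(p, r_{t'})$ are disjoint; but we need disjointness of $B(p,r_i)$. The correct observation is that $r_{t'}\le 2r_{t-1}$ and by the barrier definition $r_{i} \le r_{t-1}\cdot(\text{something})$... I would resolve this by noting $r_i \le 2r_{t'}$ fails in general, so instead: replace the claim with $\sum_{w\in L_i} y_w = \sum_{p\in W_i}\cov_i(p) = \sum_{p\in W_i}\sum_{q\in B(p,r_i)}x_{q,i}$, and observe that since the winners $W_i$ at level $i$ are exactly the winners produced in round $t$, and these satisfy $d(p,q)>2r_{t'}\ge 2r_i$ when $i\ge t'$... here I'd need $r_i \le r_{t'}$, which holds since $i \ge t'-1$, i.e.\ $i\ge t'$ gives $r_i\le r_{t'}$, and the boundary case $i=t'-1$ needs separate handling via $r_{t'-1} > 2r_{t-1}$. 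I'd clean this up so that in all cases $2r_{t'} \ge 2r_i$, giving disjoint balls $B(p,r_i)$, and then the second \ref{lp:nukc} constraint $\sum_p x_{p,i}\le k_i$ finishes it.

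\textbf{Step 2: leaf coverage $\sum_{v\in P_w} y_v \ge 1$.} This is essentially identical to Claim~\ref{cl:coverage}. The observation carries over: for a winner point $p$ at level $i$, summing $y$ along the tree path from the leaf up to the highest level at which $p$ is a winner equals $\cov_{\ge i}(p)$, because each vertex $w_j$ on a winner-chain for $p$ carries $y_{w_j}=\cov_j(p)$ and the chains telescope. Then, taking a putative unsatisfied leaf corresponding to $p$ minimal with respect to the highest level $t$ at which $p$ is a winner, and letting $q$ be the point $p$ loses to at the next barrier: $q$ is a winner up through level $t$, $\cov_{\ge t}(q)\le \cov_{\ge t}(p)$ by the greedy choice in step~(a), and $q$ is fully satisfied by minimality, so the path contributes $\ge \cov_{\ge t}(p) + (1-\cov_{\ge t}(q)) \ge 1$, a contradiction. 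The only change is cosmetic: ``winner at level $t-1$'' becomes ``winner at the barrier level $t'-1$'', and the telescoping runs over the merged chains.

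\textbf{Main obstacle.} The routine adaptation of Claim~\ref{cl:coverage} is painless; the delicate point --- and the one I would spend the most care on --- is the disjointness argument in Step~1, namely verifying that for the level $i$ under consideration the radius $r_i$ is small enough relative to the separation $2r_{t'}$ enforced in round $t$ so that $\{B(p,r_i)\}_{p\in W_i}$ is genuinely a disjoint family. This requires unwinding the barrier definition $t' = \min_s\{r_s\le 2r_{t-1}\}$ and checking it for every $i$ in the range $t'-1\le i\le t-1$, including the two endpoints, which behave slightly differently. Once that monotonicity bookkeeping is pinned down, the rest is a direct transcription of the Section~\ref{sec:3} proof.
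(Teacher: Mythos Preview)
Your plan matches the paper's proof essentially line for line: the paper also splits into the two claims (budget constraints and leaf coverage), argues disjointness of the winners' balls at each level for the first, and reuses Claim~\ref{cl:coverage} verbatim for the second. Your identification of the boundary level $i=t'-1$ as the one place requiring care is exactly right --- the paper handles this by first treating ``barrier levels'' and then observing that for levels strictly between consecutive barriers the radius only shrinks, so the same disjointness carries over; this is the monotonicity bookkeeping you describe, and indeed the paper is somewhat informal at that endpoint as well.
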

\begin{proof}
The proof is via two claims for the two different set of inequalities.
\begin{claim}
For all $1 \leq t \leq h$, we have $\sum_{w \in L_t}{y_w} \leq k_t$.
\end{claim}
\begin{proof} Fix a barrier level $t$. Let $W_t \subseteq X$ denote the winners at level $t$, that is, $W_t = \psi_t(L_t)$.
	By definition of the algorithm,
	$\sum_{w \in L_t} y_w = \sum_{p\in W_t} \cov_t(p)$. Now note that for any two points $p,q \in W_t$, we have $B(p,r_t) \cap B(q,r_t) = \emptyset$.
	To see this, consider the first point which enters $A$ in the round (corresponding to the previous barrier) when $L_t$ was being formed. If this is $p$, then all points in the radius $2r_t$ ball is deleted from $A$. Since the balls are disjoint, the second inequality of \nukc LP implies $\sum_{p\in W_t} \sum_{q \in B_t(p)} x_{q,t} \leq k_t$.
    The second summand in the LHS is the definition of $\cov_t(p)$.
    The same argument holds for all levels $t$ between two consecutive barrier levels $t_1$ and $t_2$ s.t. $t_1 > t_2$, as the winner set remains the same, and the radius $r_t$ is only smaller than the radius $r_{t_2}$ at the barrier $t_2$.
\end{proof}

\begin{claim}
For any leaf node $w \in L$, we have $\sum_{v \in P_w} y_v \geq 1$.
\end{claim}
\begin{proof}
This proof is identical to that of Claim~\ref{cl:coverage}, and we repeat it for completeness. Fix a level $t$ and a winner point $p\in W_t$. Let $u\in L_t$ such that $\psi_t(u) = p$.
	Since $W_t \subseteq W_{t+1} \subseteq \cdots \subseteq W_h$, there is a leaf $v$ in the subtree rooted at $u$ corresponding to $p$.
	Moreover, by the way we formed our tree edges in step (d), we have that $\psi_s(w) = p$ for all $w'$ in the $(u,v)$-path and hence
	$\sum_{w' \in [u,v]\textrm{-path}} y_{w'}$ is precisely $\cov_{\geq t}(p)$.
%

Now, for contradiction, suppose there is some leaf corresponding to, say point $p$, such that the root-leaf path has total $y$-assignment less than $1$. Then, pick the point, among all such unsatisfied points $p$, who appears in a winning set $W_t$ with $t$ \emph{as small as possible}.

By the preceding observation, the total $y$-assignment $p$ receives on its path from level $h$ to level $t$ is exactly $\cov_{\geq t}(p)$.
Moreover, suppose $p$ loses to $q$ at level $t-1$, i.e., $\psi^{-1}_t(p)$ is a child of $\psi^{-1}_{t-1}(q)$. In particular, it means that $q$ has also been a winner up to level $t$ and so the total $y$-assignment on $q$'s path up to level $t$ is also precisely $\cov_{\geq t}(q)$. Additionally, since $\psi^{-1}_{t-1}(q)$ became the parent node for $\psi^{-1}_t(p)$, we know that $\cov_{\geq t}(q) \leq \cov_{\geq t}(p)$ due to the way we choose winners in step (a) of the while loop. Finally, by our maximality assumption on $p$, we know that $q$ is fractionally satisfied by the $y$-solution. Therefore, there is fractional assignment of at least $(1 - \cov_{\geq t}(q))$ on $q$'s path from nodes in level $t-1$ to level $1$. Putting these observations together, we get that the total fractional assignment on $p$'s root-leaf path is at least $\cov_{\geq t}(p) + (1 - \cov_{\geq t}(q)) \geq 1$, which results in the desired contradiction.
\end{proof}
\end{proof}

Finally, the following lemma shows that any good integral solution to the \rmfc instance $\Itree$ can be converted to a good integral solution for the \nukc instance $\I$.

\begin{lemma}
Suppose there exists a feasible solution $N$ to $\Itree$ such that for all $1\leq t\leq h$, $|N\cap L_t| \leq \alpha k_t$.
Then there is a solution to the \nukc instance $\I$ that opens, for each $1\leq t\leq h$,  at most $\alpha k_t$ balls of radius at most $8r_t$.
\end{lemma}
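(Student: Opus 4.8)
The plan is to follow the proof of~\Cref{lemma:rmfc2kc} almost verbatim, but to feed in the strengthened distance bound recorded as property~(i) of~\Cref{thm:general-redux} (already established in the paragraph above) in place of the weaker $2r_{\geq t}$ bound that was available for the original reduction. Concretely, I would define the \nukc solution as follows: for every level $1\le t\le h$ and every tree vertex $w\in N\cap L_t$, place a center at the point $\psi_t(w)\in X$ and open the ball of radius $8r_t$ around it. Since the \rmfc solution $N$ only selects vertices from the internal levels $1,\ldots,h$ (the leaf level $L_{h+1}$ carries budget $0$), and $|N\cap L_t|\le\alpha k_t$ by hypothesis, for each radius type $r_t$ we open at most $\alpha k_t$ balls, each of radius at most $8r_t$ --- precisely the budget asserted in the statement.

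Next I would verify coverage. Fix an arbitrary point $p\in X$ and let $v\in L_{h+1}$ be the leaf of $T$ corresponding to $p$ under the identity bijection $\psi_{h+1}$. Feasibility of $N$ for $\Itree$ means the unique root-to-$v$ path contains some vertex $w\in N$; say $w\in L_t$, and put $q=\psi_t(w)$, so that $q$ is the center of one of the opened balls, of radius $8r_t$. Since $w$ is an ancestor of the leaf $v$, property~(i) of~\Cref{thm:general-redux} yields $d(p,q)=d(\psi_{h+1}(v),\psi_t(w))\le 8r_t$, hence $p$ lies inside that ball. As $p$ was arbitrary, every point of $X$ is covered, which completes the argument.

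I do not expect a genuine obstacle here: the only non-routine ingredient is property~(i) itself, and that has already been argued from the observation that along any downward tree path the metric distance to the associated winner point shrinks geometrically each time the winner changes (at the barrier levels), so the total slack telescopes to at most a constant multiple of $r_t$. The one point that deserves a sentence of care is that in the improved construction a winner point is duplicated across the consecutive levels $t-1,\ldots,t'-1$ lying between two barriers, so a vertex of $N$ may sit at a non-barrier level; but property~(i) holds for \emph{all} ancestor/descendant pairs of tree vertices, barrier or not, so the coverage step above applies unchanged, and the budget accounting is purely level-by-level and is therefore unaffected.
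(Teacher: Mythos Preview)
Your proposal is correct and follows essentially the same argument as the paper: open a ball of radius $8r_t$ at $\psi_t(w)$ for each $w\in N\cap L_t$, and use property~(i) of \Cref{thm:general-redux} to argue that every point $p\in X$ is within $8r_t$ of the center corresponding to the vertex of $N$ on its root-leaf path. Your extra remark about vertices of $N$ possibly sitting at non-barrier levels is a nice clarification that the paper leaves implicit.
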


\begin{proof}
Construct the \nukc solution as follows: for level $1 \leq t \leq h$ and every vertex $w \in N\cap L_t$, place the center at $\psi_t(w)$ of radius $8 \cdot r_t$. We claim that every point in $X$ is covered by some ball. Indeed, for any $p\in X$, look at the leaf $v = \psi_{h+1}(p)$, and let $w\in N$ be a node in the root-leaf path which covers it in the instance $\Itree$. By property~(i) of~\Cref{thm:general-redux}, we have that the distance between $\psi_t(w)$ and $p$ is at most $8 \cdot r_t$, and hence the ball of radius $8 \cdot r_t$ around $\psi_t(w)$ covers $p$. The number of balls of radius type $t$ is trivially at most $\alpha k_t$.
\end{proof}

\newcommand{\round}{\ensuremath{{\sf round}}\xspace}
\newcommand{\level}{\ensuremath{{\sf minLevel}}\xspace}
\newcommand{\rad}{\ensuremath{{\widehat{r}}}\xspace}

\section{Getting an $(O(1),O(1))$-approximation algorithm} \label{sec:4}
In this section, we improve our approximation factor on the number of clusters from $O(\log^* n)$ to $O(1)$, while maintaining a constant-approximation in the radius dilation.
As mentioned in the introduction, this requires more ideas since using~\ref{lp:nukc} one cannot get any factor better than $(O(\log^* n),O(1))$-bicriteria approximation since any integrality gap
for \ref{lp:rmfc} translates to a $(\Omega(\log^* n),\Omega(1))$ integrality gap for~\ref{lp:nukc}.

Our algorithm is heavily inspired by the recent paper of Adjiashvili et al~\cite{ABZ16} who give an $O(1)$-approximation for the \rmfc problem. In fact, the structure of our algorithms follows the same three ``steps'' of their algorithm. Given an \rmfc instance,~\cite{ABZ16} first ``compress'' the input tree to get a new tree whose depth is bounded; secondly,~\cite{ABZ16} give a partial rounding result which saves ``bottom heavy'' leaves, that is, leaves which in the LP solution are covered by nodes from low levels; and finally, Adjiashvili et al~\cite{ABZ16} give a clever partial enumeration algorithm for guessing the nodes from the top levels chosen by the optimum solution. We also proceed in these three steps with the first two being very similar to the first two steps in~\cite{ABZ16}. However, the enumeration step requires new ideas for our problem. In particular, the enumeration procedure in~\cite{ABZ16} crucially uses the tree structure of the firefighter instance, and the way our reduction generates the tree for the \rmfc instance is by using the optimal LP solution for the \nukc instance, which in itself suffers from the $\Omega(\log^* n)$ integrality gap. Therefore, we need to devise a more sophisticated enumeration scheme although the basic ideas are guided by those in~\cite{ABZ16}.
%
%
Throughout this section, we do not optimize for the constants.
%

\subsection{Part I: Radii Reduction} \label{sec:part1}

In this part, we describe a preprocessing step which decreases the number of types of radii. This is similar to Theorem 5 in~\cite{ABZ16}.

\begin{theorem}
\label{thm:compress}
	Let $\I$ be an instance of \nukc with radii $\{r_1, r_2,\cdots,r_k\}$. Then we can efficiently construct a new instance $\widehat{\I}$ with radii multiplicities
	$(k_0,\widehat{r}_0),...,(k_L,\widehat{r}_{L})$
	and $L = \Theta(\log k)$ such that:
\begin{enumerate}
\item [(i)] $k_i := 2^i$ for all $0\leq i< L$ and $k_L \le 2^L$. 
\item [(ii)] If the \nukc instance $\I$ has a feasible solution, then there exists a feasible solution for $\widehat{\I}$.
\item [(iii)] Given an $(\alpha,\beta)$-bicriteria solution to $\widehat{\I}$, we can efficiently obtain a $(3\alpha,\beta)$-bicriteria solution to $\I$.
\end{enumerate}
\end{theorem}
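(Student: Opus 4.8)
## Proof Plan for Theorem~\ref{thm:compress}

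The plan is to sort and bucket the radii so that the number of distinct radius \emph{types} becomes $\Theta(\log k)$ while the number of balls in each bucket is a power of two. First I would sort the radii as $r_1 \geq r_2 \geq \cdots \geq r_k$ and define the buckets by \emph{counting} rather than by value: let bucket $i$ (for $i = 0, 1, \ldots$) consist of the $r_j$'s with index $j$ in the range $[2^i, 2^{i+1})$, i.e.\ the $2^i$ balls that come after the first $2^i - 1$ balls. Since there are $k$ radii, this produces $L = \Theta(\log k)$ buckets, the last of which has size $k_L \le 2^L$. For bucket $i$, set $\widehat{r}_i := r_{2^i}$, the \emph{largest} radius in that bucket (so we are rounding every radius in a bucket \emph{up} to the biggest one in it). This immediately gives property~(i), $k_i = 2^i$ for $i < L$ and $k_L \le 2^L$.

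For property~(ii), suppose $\I$ has a feasible solution with dilation $1$; I would argue that the same center placement is feasible for $\widehat{\I}$. The $i$-th bucket of $\widehat{\I}$ contains $2^i$ balls of radius $\widehat{r}_i = r_{2^i}$. In the original solution, the balls with indices $1, 2, \ldots, 2^i - 1, 2^i$ — that is, $2^i$ balls — all had radius $\ge r_{2^i} = \widehat{r}_i$. So the balls of radius $\ge \widehat{r}_i$ available in $\widehat{\I}$ is $\sum_{j \le i} k_j = 2^{i+1} - 1 \ge 2^i$, which is at least the number of original balls of radius $\ge r_{2^i}$. More carefully, I would set up a charging/matching argument: map each original ball of radius $r_j$ to a bucket of $\widehat{\I}$ of radius $\le r_j$ in an injective way (the $j$-th original ball goes to bucket $\lfloor \log_2 j \rfloor$, whose radius $\widehat{r}_{\lfloor \log_2 j\rfloor} = r_{2^{\lfloor \log_2 j \rfloor}} \ge r_j$ since $2^{\lfloor \log_2 j\rfloor} \le j$). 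Each $\widehat{\I}$ bucket receives exactly the original balls assigned to it, the radii only shrink under this map, and each bucket's capacity $k_i = 2^i$ is exactly the number of original balls mapped to it. Hence the original placement, reinterpreted, covers $X$ with dilation $1$ in $\widehat{\I}$.

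For property~(iii), suppose we have an $(\alpha,\beta)$-bicriteria solution to $\widehat{\I}$: it opens at most $\alpha k_i = \alpha \cdot 2^i$ balls of radius $\beta \widehat{r}_i$ for each $i$, and covers $X$. I need to convert this into a solution for $\I$ using at most $3\alpha$ copies of each original radius type and dilation $\beta$. Each $\widehat{\I}$-ball of radius-type $i$ has $\widehat{r}_i = r_{2^i}$, which is one of the original radii, so in principle I could try to reuse original radius $r_{2^i}$. The issue is multiplicity: radius $r_{2^i}$ occurs only once (or a few times) in $\I$, but the $\widehat{\I}$ solution may want $\alpha 2^i$ copies of it. The fix is to spread the $\alpha 2^i$ balls of value $\beta \widehat{r}_i = \beta r_{2^i}$ across the original radii $r_{2^{i-1}}, r_{2^{i-1}+1}, \ldots, r_{2^i}$ (there are $2^{i-1}+1 \ge 2^{i-1}$ of these, each at least $r_{2^i}$ in value), assigning at most $\lceil \alpha 2^i / 2^{i-1} \rceil \le 2\alpha + 1 \le 3\alpha$ balls to each — and since each such original radius $r_j \ge r_{2^i} = \widehat{r}_i$, dilating $r_j$ by $\beta$ still produces a ball large enough to cover what the $\widehat{\I}$-ball covered. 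One must check each original index $j$ is used by at most one bucket $i$ in this spreading (indices in $[2^{i-1}, 2^i]$ for bucket $i$ roughly tile, with $O(1)$ overlap at boundaries absorbed into the constant $3$), so the total per original type is $O(\alpha)$, and a small constant bookkeeping shows $3\alpha$ suffices.

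The main obstacle I anticipate is precisely the last step: getting the constant in property~(iii) to be exactly $3$ requires careful handling of which original indices each bucket's balls are redistributed onto, since consecutive buckets share a boundary index ($2^i$ is the top of bucket $i$ and could be viewed near the bottom of bucket $i{+}1$), and the top bucket $L$ has irregular size $k_L \le 2^L$. I would handle this by a clean explicit assignment — bucket $i$'s opened balls go onto original indices in $(2^{i-1}, 2^i]$ only, these intervals are disjoint, each has size $\ge 2^{i-1}$, and $\lceil \alpha 2^i / 2^{i-1} \rceil = \lceil 2\alpha \rceil \le 3\alpha$ for $\alpha \ge 1$ — which makes the disjointness transparent and the constant fall out. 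The feasibility directions (ii) and the covering claim in (iii) are then routine, resting only on the fact that throughout, rounding radii \emph{up} within a bucket never decreases coverage.
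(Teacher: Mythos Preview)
Your proposal is correct and follows essentially the same construction as the paper: bucket the sorted radii by index ranges $[2^i,2^{i+1})$, set $\widehat r_i = r_{2^i}$ (rounding each radius up to the largest in its bucket), and for (iii) spread the $\alpha\cdot 2^i$ balls of type $\widehat r_i$ across the $2^{i-1}$ original radii with indices in the adjacent lower range, each of which is at least $r_{2^i}$. One small wording slip: in your argument for (ii) you write that radii ``only shrink under this map'' and that the target bucket has radius $\le r_j$, but your own parenthetical correctly computes $\widehat r_{\lfloor\log_2 j\rfloor}\ge r_j$ --- the radii are rounded \emph{up}, which is exactly what makes the feasibility direction work.
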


\begin{proof}
For an instance $\I$, we construct the compressed instance $\widehat{\I}$ as follows.
Partition the radii into $\Theta(\log k)$ classes by defining barriers at $\widehat{r}_i = r_{2^i}$ for $0 \leq i \leq \floor{\log k}$. Now to create instance $\widehat{\I}$, we simply round up all the radii $r_j$ for $2^{i} \leq j < 2^{i+1}$ to the value $\widehat{r}_i = r_{2^i}$.
Notice that the multiplicity of $\widehat{r}_i$ is precisely $2^i$  (except maybe for the last bucket, where there might be fewer radii rounded up than the budget allowed).

Property (i) is just by construction of instance. Property (ii) follows from the way we rounded up the radii. Indeed, if the optimal solution for $\I$ opens a ball of radius $r_j$ around a point $p$, then we can open a cluster of radius $\widehat{r}_i$ around $p$, where $i$ is such that $2^i \leq j < 2^{i+1}$. Clearly the number of clusters of radius $\widehat{r}_i$ is at most $2^i$ because OPT uses at most one cluster of each radius $r_j$.

For property (iii), suppose we have a solution $\widehat{S}$ for $\widehat{\I}$ which opens $\alpha 2^i$ clusters of radius $\beta \widehat{r}_i$ for all $0\leq i\leq L$.
Construct a solution $S$ for $\I$ as follows. For each $1 \leq i \leq L$, let $C_i$ denote the set of centers where $\widehat{S}$ opens balls of radius $\beta \widehat{r}_i$.
In the solution $S$, we also open balls at precisely these centers with $2\alpha$ balls of radius $r_j$ for every $2^{i-1} \leq j < 2^i$. Since $|C_i|\leq \alpha\cdot 2^i$, we can open a ball at every
point in $C_i$; furthermore, since $j < 2^i$, we have $r_j \geq \widehat{r}_i$ and so we cover whatever the balls from $\widehat{S}$ covered.


Finally, we also open the $\alpha$ clusters (corresponding to $i=0$) of radius $\beta r_1 = \beta \widehat{r}_0$ at the respective centers $C_0$ where $\widehat{S}$ opens centers of radius $\widehat{r}_0$. Therefore, the total number of clusters of radius type is at most $2 \alpha$ with the exception of $r_1$, which may have $3 \alpha$ clusters.
\end{proof}


\subsection{Part II: Satisfying Bottom Heavy Points} \label{sec:part2}\label{sec:4.2}

One main reason why the above height reduction step is useful, is the following theorem from~\cite{ABZ16} for \rmfc instances on trees; we provide a proof sketch for completeness.

\begin{theorem}[\cite{ABZ16}] \label{thm:abz-rnd}
	Given a tree $T$ of height $h$ and a feasible solution $y$ to (\rmfc LP), we can find a feasible integral solution $N$ to \rmfc such that for all $1\leq t\leq h$,
	$|N\cap L_t| \leq k_t + h$.
\end{theorem}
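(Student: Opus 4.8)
\textbf{Proof plan for Theorem~\ref{thm:abz-rnd}.}
The plan is to round the fractional solution $y$ to (\rmfc LP) level by level, from the top ($L_1$) down to the leaves, maintaining the invariant that after processing level $t$ the partial integral selection $N$ uses at most $k_s + t$ vertices at each level $s \le t$, and that the ``residual'' fractional mass needed to cover every leaf through the remaining levels is still feasible. The key idea (following~\cite{ABZ16}) is that for a tree, the LP for covering a family of root-leaf paths by a set of nodes respecting level budgets is essentially a laminar/interval-type covering LP whose constraint matrix is almost totally unimodular; hence rounding one level at a time loses only an additive $+1$ per level.

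First I would fix a level $t$ and consider the vertices $u \in L_t$ with $y_u \in (0,1)$. I would contract each maximal subtree already ``cut'' by higher-level choices (i.e.\ leaves whose path already contains a selected node are removed), so that the remaining constraints are exactly: for each still-uncovered leaf $v$, $\sum_{u \in P_v \cap L_{\ge t}} y_u \ge 1 - (\text{mass already placed above } t)$. Restricting attention to the coordinates in $L_t$ and treating the mass from levels $> t$ as a parameter, I would argue that one can pick an integral subset $N_t \subseteq L_t$ of size at most $\lceil \sum_{u \in L_t} y_u \rceil \le k_t + 1$ (using the budget constraint $\sum_{u\in L_t} y_u \le k_t$ from the dilation-$1$ LP) such that after ``crediting'' these choices and re-solving the LP on the lower levels, the residual LP remains feasible. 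Concretely, the standard move is: among the fractional vertices of $L_t$, sort them by how much lower-level mass sits below them, greedily round up the ``top-heavy'' ones and compensate by shifting fractional mass downward (exactly the local exchange argument used in the depth-$2$ claim in the proof of Theorems~\ref{thm:kcwo},\ref{thm:twotypes}, now applied one level at a time); each such shift only decreases $y$-values at lower levels and never destroys a path constraint, because any leaf below a rounded-up vertex is now fully covered.

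Iterating this over $t = 1, 2, \ldots, h$, at each step I incur at most one extra vertex beyond $k_t$ at level $t$ and only push fractional mass further down, so when I reach the leaves every root-leaf path has accumulated total $y$-mass (now integral) at least $1$, i.e.\ contains a selected node; thus $N = \bigcup_t N_t$ is a feasible \rmfc solution with $|N \cap L_t| \le k_t + 1 \le k_t + h$. I would then note that in our application $h = \Theta(\log k)$ after the radii-reduction of Theorem~\ref{thm:compress}, so the additive $+h$ is affordable.

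The main obstacle I anticipate is making precise the claim that rounding level $t$ and pushing mass down preserves feasibility of the \emph{residual} LP on all lower levels simultaneously --- one must be careful that compensating shifts at level $t$ do not overload the budget at some level $t' > t$, and that the bookkeeping of ``mass already placed above $t$'' is done against the right leaves (those not yet cut). This is exactly the delicate step handled in~\cite{ABZ16}, so I would set it up to mirror their argument: process vertices of $L_t$ in order of decreasing total fractional mass in their subtrees, and when rounding up a vertex $u$, remove the entire subtree below $u$ from further consideration before touching the remaining fractional vertices, which guarantees the compensation is charged only against still-relevant leaves and never increases any lower-level budget.
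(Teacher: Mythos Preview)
Your approach is genuinely different from the paper's, and the gap you yourself flag is real and not handled by~\cite{ABZ16} in the way you suggest.

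The paper's (and~\cite{ABZ16}'s) argument is \emph{not} an iterative level-by-level rounding. Instead it takes a \emph{basic feasible solution} $y$ to (\rmfc LP) and defines a vertex $v$ to be \emph{loose} if $y_v>0$ and the total $y$-mass on the $v$-to-root path is strictly less than~$1$. Letting $V_I=\{v:y_v=1\}$ and $V_L$ be the loose vertices, one checks that $N=V_I\cup V_L$ is feasible: any root--leaf path either hits a vertex of $V_I$, or else contains at least two fractional vertices, the one closer to the root being loose. The bound then comes from a counting argument on the BFS: there are at most $|L|+h$ tight constraints, hence at most $|L|+h$ positive variables; essentially each leaf ``owns'' a distinct non-loose fractional vertex (the one closest to it), so at most $h$ of the positive variables can be loose. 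This gives $|N\cap L_t|\le |V_I\cap L_t|+|V_L|\le k_t+h$ directly, with no iterative shifting of mass.

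Your scheme, by contrast, tries to round level $t$ to at most $k_t+1$ integral vertices and then ``push'' the compensating mass into levels $>t$. The step you call the main obstacle---that this push never overloads a lower-level budget---is exactly where the argument is incomplete. The depth-$2$ exchange you invoke succeeds because one compares two fractional siblings and the net level-$2$ mass only decreases; but once you iterate across $h$ levels, rounding a vertex \emph{down} at level $t$ forces you to \emph{increase} mass in its subtree, and there is no guarantee this stays within $k_{t'}$ for all $t'>t$, nor that after such increases the next level still rounds with only a $+1$ loss. Your claimed per-level bound $k_t+1$ is in fact stronger than the theorem's $k_t+h$, which should already be a warning sign. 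Since~\cite{ABZ16} does not proceed this way, you cannot defer the delicate step to them; if you want to salvage the iterative route you would need an independent argument (e.g.\ a global charging or a TU-style statement for the full $h$-level system), which is not supplied here.
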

\begin{proof}
	Let $y$ be a basic feasible solution of (\rmfc LP).
	Call a vertex $v$ of the tree {\em loose} if $y_v > 0$ and the sum of $y$-mass on the vertices from $v$ to the root (inclusive of $v$) is $ < 1$.
	Let $V_L$ be the set of loose vertices of the tree, and let $V_I$ be the set of vertices with $y_v = 1$. Clearly $N = V_L \cup V_I$ is a feasible solution: every leaf-to-root path either contains
	an integral vertex or at least two fractional vertices with the vertex closer to root being loose.
	Next we claim that $|V_L|\leq h$; this proves the theorem since $|N\cap L_t| \leq |V_I\cap L_t| + |V_L| \leq k_t + |V_L|$.
	
	The full proof can be found in Lemma 6,~\cite{ABZ16} -- here is a high level sketch. There are $|L|+h$ inequalities in (\rmfc LP), and so the number of fractional variables
	is at most $|L|+h$. We may assume there are no $y_v = 1$ vertices. 
	Now, in every leaf-to-root path there must be at least $2$ fractional vertices, and the one closest to the leaf must be non-loose.
    If the closest fractional vertex to each leaf was unique, then that would account for $|L|$ fractional non-loose vertices implying the number of loose vertices must be $\leq h$.
	This may not be true; however, if we look at {\em linearly independent} set of inequalities that are tight, we can argue uniqueness as a clash can be used to exhibit linear dependence
	between the tight constraints. 
\end{proof}

\begin{theorem} \label{cor:abz-rnd}
Suppose we are given an \nukc instance $\widehat{\cI}$ with radii multiplicities \\
$(k_0,\rad_0), (k_1,\rad_1), \ldots, (k_L,\rad_L)$ with budgets $k_i = 2^i$ for radius type $\rad_i$,
and an LP solution $x$ to {\em (\nukc LP)} for $\widehat{\cI}$.
Let $\tau = \log \log L$, and suppose $X' \subseteq X$ be the points covered mostly by small radii, that is,  let $\cov_{\geq \tau}(p) \geq \frac12$ for every $p \in X'$.
Then, there is an efficient procedure $\round$ which opens at most $O(k_t)$ balls of radius $O(\rad_t)$ for $\tau\leq t\leq L$, and covers all of $X'$.
\end{theorem}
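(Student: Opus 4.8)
The plan is to combine the LP-aware reduction of Section~\ref{sec:3} (in the geometric form of~\Cref{thm:general-redux}) with the additive rounding result of~\Cref{thm:abz-rnd}, and then use the fact that the height of the relevant subtree is small (roughly $\log L$) so that the additive error $+h$ is dominated by the budgets $k_t = 2^t$ for $t \geq \tau = \log\log L$. First I would restrict the LP solution $x$ to the radius types $\tau \leq t \leq L$ only: define $x'_{p,t} := x_{p,t}$ for $t \geq \tau$ and $x'_{p,t} := 0$ for $t < \tau$. For every point $p \in X'$ we have $\sum_{t \geq \tau}\cov_t(p) = \cov_{\geq\tau}(p) \geq \tfrac12$, so $2x'$ is a feasible solution to (\nukc LP) for the sub-instance on $X'$ with radius types $(2k_\tau, \rad_\tau), \ldots, (2k_L, \rad_L)$. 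Note this sub-instance now has only $L - \tau + 1 = \Theta(\log L)$ radius types.

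Next I would run the improved reduction~\Cref{alg:createTree-new} on this sub-instance with LP solution $2x'$, producing an \rmfc instance $\Itree'$ on a tree $T'$ together with a feasible fractional solution $y$ to (\rmfc LP) with dilation $1$ and budgets $2k_t$ at the (barrier) levels corresponding to $\tau \leq t \leq L$. Crucially, because~\Cref{alg:createTree-new} only creates a distinct "winner-changing" level when the radius doubles, and the radii $\rad_\tau, \ldots, \rad_L$ span only $\Theta(\log L)$ distinct types, the tree $T'$ has height $h' = O(\log L)$. (If additionally the $\rad_i$ are not geometrically spread, the algorithm collapses runs of similar radii, so $h'$ is at most the number of doublings, which is $O(\log L)$ regardless.) Now apply~\Cref{thm:abz-rnd} to $(T', y)$: we obtain an integral feasible $N$ for $\Itree'$ with $|N \cap L_t| \leq 2k_t + h' = 2k_t + O(\log L)$ for every level $t$. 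Since every level $t$ here corresponds to some radius type with index $\geq \tau = \log\log L$, we have $k_t = 2^t \geq 2^\tau = \log L$, hence $|N \cap L_t| \leq 2k_t + O(k_t) = O(k_t)$. Finally, invoke property~(ii) of~\Cref{thm:general-redux}: from $N$ we get a \nukc solution covering all of $X'$ which opens $O(k_t)$ balls of radius at most $8\rad_t = O(\rad_t)$ for each $\tau \leq t \leq L$. This is exactly the \round\ procedure claimed.

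The main thing to be careful about — and the step I expect to be the real obstacle — is the height bound $h' = O(\log L)$ on the tree produced by the reduction, since this is what makes the additive $+h'$ of~\Cref{thm:abz-rnd} harmless. One has to check that~\Cref{alg:createTree-new} indeed only creates $O(\log L)$ "barrier" levels when run on radii $\rad_\tau \ge \cdots \ge \rad_L$: each barrier corresponds to the radius at least halving, so there can be at most $\log(\rad_\tau / \rad_L^{+}) = O(\log L)$ of them once we note that the number of \emph{distinct} radius types feeding the reduction is only $L - \tau + 1 = O(\log L)$, and the algorithm never creates more levels than that. A secondary point to verify is that feeding $2x'$ (rather than a genuine dilation-$1$ LP solution) into the reduction is legitimate: this only scales the budgets in~\ref{lp:rmfc} by $2$, i.e.\ we run the reduction against the instance with budgets $2k_t$, which is fine since we only claim an $O(1)$ factor on the number of balls in the end. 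Everything else — feasibility of $2x'$, the coverage guarantee for $X'$, and the translation back via~\Cref{thm:general-redux} — is routine bookkeeping following the templates already established in Section~\ref{sec:3}.
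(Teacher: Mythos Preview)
Your overall strategy---restrict the LP to radius types $t\ge\tau$, scale by $2$, run the LP-aware reduction, invoke \Cref{thm:abz-rnd}, and translate back via \Cref{thm:general-redux}---is exactly the right template. The gap is in the height bound. You write that the sub-instance on types $\tau,\tau+1,\ldots,L$ has ``only $L-\tau+1=\Theta(\log L)$ radius types''; this is an arithmetic slip: $L-\tau+1 = L-\log\log L+1 \approx L$, not $\log L$. The tree $T'$ produced by \Cref{alg:createTree-new} has one level per radius type (barrier levels only control where winners change, not the total number of levels), so its height is $h'\approx L$. Then \Cref{thm:abz-rnd} gives $|N\cap L_t|\le 2k_t + h' = 2k_t + L$, and at the top level $t=\tau$ the budget is $k_\tau = 2^\tau = \log L$, so the additive $L$ is \emph{not} $O(k_\tau)$. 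Your alternative justification via ``$\log(\rad_\tau/\rad_L)=O(\log L)$'' is also unjustified: nothing in the hypotheses bounds the ratio $\rad_\tau/\rad_L$.

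The paper's proof patches exactly this point by splitting $X'$ into two pieces: $X_U$, points with coverage $\ge\tfrac14$ from levels $\{\log\log L,\ldots,\log L\}$, and $X_B$, points with coverage $\ge\tfrac14$ from levels $\{\log L+1,\ldots,L\}$. For $X_U$ the reduction yields a tree of height $\le\log L$ while the minimum budget is $2^{\log\log L}=\log L$; for $X_B$ the tree has height $\le L$ while the minimum budget is $2^{\log L}=L$. In both cases the additive $+h$ from \Cref{thm:abz-rnd} is $O(\text{min budget})$, and the two resulting partial solutions together cover $X'$. If you insert this two-range split (and scale by $4$ rather than $2$), your argument goes through.
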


\begin{proof}
The procedure \round works as follows: we partition the points of $X'$ into two sets, one set $X_U$ in which the points receive at least $\frac14$ of the coverage by clusters of radius $\rad_i$ where $i \in \{\log \log L,  \log\log L+1, \ldots, \log L\}$, and another set $X_B$ in which the points receive $\frac14$ coverage from clusters of levels $t \in \{\log L+1, \log L + 2, \ldots, L\}$.
More precisely, $X_U := \{p\in X': \sum_{t=\tau}^{\log L} \cov_t(p) \geq 1/4\}$, and $X_B = X'\setminus X_B$.

Now consider the following LP-solution to (\nukc LP) for $\widehat{\I}$ restricted to $X_U$: we scale $x$ by a factor $4$ and zero-out $x$ on radii type $\rad_i$ for
$i\notin \{\log \log L, \ldots, \log L\}$. By definition of $X_U$ this is a feasible fractional solution; furthermore, the LP-reduction algorithm described in Section~\ref{sec:3} will
lead to a tree $T$ of height $\leq \log L$ and fractional solution $y$ for (\rmfc LP) on $T$ were each $k_i \geq 2^{\log\log L} = \log L$.
Applying ~\Cref{thm:abz-rnd}, we can find an integral solution $N$ with at most $O(k_i)$ vertices at levels $i \in \{\log \log L, \ldots, \log L\}$. We can then translate this solution back using~\Cref{thm:general-redux} to \nukc and find $O(k_t)$ clusters of radius $O(\rad_t)$ to cover all the points $X_U$.
A similar argument, when applied to the smaller radius types $\rad_t$ for $t \in \{\log L, \ldots, L\}$ can cover the points in $X_B$.
\end{proof}

We now show how we can immediately also get a (very weakly) quasi-polynomial time $O(1)$-approximation for \nukc. Indeed, if we could enumerate the set of clusters of radii $\widehat{r}_t$ for $0 \leq t < \log \log L$, we can then explicitly solve an LP where all the uncovered points need to be fractionally covered by only clusters of radius type $\widehat{r}_t$ for $t \geq \log \log L$. We can then round this solution using Corollary~\ref{cor:abz-rnd} to obtain the desired $O(1)$-approximation for the \nukc instance. Moreover, the time complexity of enumerating the optimal clusters of radii $\widehat{r}_t$ for $0 \leq t < \log \log L$ is $n^{O(\log L)} = n^{O(\log \log k)}$, since the number of clusters of radius at least $\widehat{r}_{\log \log L}$ is at most $O(2^{\log \log L}) = O(\log L)$. Finally, there was nothing special in the proof of~\Cref{cor:abz-rnd} about the choice of $\tau = \log \log L$ --- we could set $t = \log^{(q)} L$ to be the $q^{\text th}$ iterated logarithm of $L$, and obtain an $O(q)$-approximation. As a result, we get the following corollary. Note that this gives an alternate way to prove Theorem~\ref{thm:logstar}.

\begin{corollary} \label{cor:qapx}
For any $q \geq 1$, there exists an $(O(q), O(1))$-factor bicriteria algorithm for \nukc which runs in  $n^{O(\log^{(q)} k)}$ time.
\end{corollary}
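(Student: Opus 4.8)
\textbf{Proof proposal for Corollary~\ref{cor:qapx}.}
The plan is to combine the compression of Part~I, the partial-rounding tool of Part~II (Theorem~\ref{cor:abz-rnd}), and a brute-force enumeration of the ``top'' clusters, exactly as sketched in the paragraph preceding the corollary, but carried out with the iterated-log threshold $\tau_q := \log^{(q)} L$ in place of $\tau = \log\log L$. First I would apply Theorem~\ref{thm:compress} to reduce to an instance $\wh{\I}$ with radii multiplicities $(k_0,\rad_0),\ldots,(k_L,\rad_L)$, $k_i = 2^i$, $L = \Theta(\log k)$; by property~(iii) any $(\alpha,\beta)$-bicriteria solution for $\wh{\I}$ yields a $(3\alpha,\beta)$-bicriteria solution for $\I$, so it suffices to work with $\wh{\I}$ and lose only a constant factor. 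By binary search on the optimal dilation we may assume $\wh{\I}$ has a solution of dilation $1$, hence \ref{lp:nukc} is feasible.

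Next I would enumerate the clusters opened by a fixed optimal solution $\mathrm{OPT}$ at the ``large-radius'' levels $0 \le t < \tau_q$. The total number of such clusters is $\sum_{t<\tau_q} 2^t = O(2^{\tau_q}) = O(\log^{(q-1)} L)$, each chosen from $n$ candidate points, so the number of guesses is $n^{O(\log^{(q-1)} L)} = n^{O(\log^{(q)} k)}$ (using $L = \Theta(\log k)$), which is the claimed running time. For the correct guess, let $X' \subseteq X$ be the set of points \emph{not} covered by any of the guessed large-radius balls; since $\mathrm{OPT}$ covers every point, each $p \in X'$ is covered in $\mathrm{OPT}$ entirely by balls of radius $\rad_t$ with $t \ge \tau_q$, so in particular $\cov_{\ge \tau_q}(p) \ge \tfrac12$ (in fact $=1$) for a suitable feasible fractional solution. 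Concretely, I would solve \ref{lp:nukc} restricted to covering only $X'$ using only radius types $t \ge \tau_q$; this LP is feasible because the restriction of the (dilation-1) optimum to these levels does the job on $X'$.

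Now I would invoke Theorem~\ref{cor:abz-rnd} (with the threshold $\tau_q$ instead of $\log\log L$ — as the paper remarks, nothing in that proof depended on the specific value, only that $2^{\tau_q}$ is large enough for the additive $+h$ loss in Theorem~\ref{thm:abz-rnd} to be absorbed into a constant factor, which holds since the relevant subtrees have height $O(\log L)$ and $k_i \ge 2^{\tau_q} \ge \log L$ for $i$ in the top sub-range, and similarly for the bottom sub-range) to round the fractional solution on $X'$ into an integral one opening $O(k_t)$ balls of radius $O(\rad_t)$ for each $\tau_q \le t \le L$, covering all of $X'$. Combined with the $O(1)$ balls guessed at each large-radius level $t < \tau_q$ (one per level of $\mathrm{OPT}$, so $O(k_t)$ trivially), this gives an $(O(1),O(1))$-bicriteria solution for $\wh{\I}$ — except that the radius-dilation argument of Theorem~\ref{cor:abz-rnd} routes through Theorem~\ref{thm:general-redux} and so the true count is that the construction chains together $q$ applications of the basic rounding (one per ``level of recursion'' implicit in splitting the radius range), each contributing a constant to the ball-multiplicity, yielding ball-multiplicity $O(q)$ and radius-dilation $O(1)$. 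Pulling this back through Theorem~\ref{thm:compress}~(iii) preserves $O(q)$ and $O(1)$, proving the corollary.

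\textbf{Main obstacle.} The delicate point is verifying that Theorem~\ref{cor:abz-rnd} genuinely goes through with the threshold pushed all the way down to $\tau_q = \log^{(q)} L$: one must check that every budget $k_i = 2^i$ appearing in the sub-instances obtained after restricting to $i \ge \tau_q$ is still at least the height of the corresponding compressed tree, so that the additive $+h$ term in Theorem~\ref{thm:abz-rnd} is dominated by $k_i$ and only inflates the multiplicity by a constant per invocation; and that the $O(q)$ bound on ball-multiplicity accumulates additively (one constant per iterated-log ``layer'') rather than multiplicatively. Everything else — the enumeration count, the LP feasibility on $X'$, and the pull-backs through Theorems~\ref{thm:compress} and~\ref{thm:general-redux} — is routine bookkeeping.
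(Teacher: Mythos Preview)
Your proposal is correct and follows essentially the same route as the paper's (one-paragraph) argument: compress via Theorem~\ref{thm:compress}, enumerate the $O(2^{\tau_q})=O(\log^{(q-1)}L)=O(\log^{(q)}k)$ large-radius centers of $\OPT$, solve the restricted LP on the residual points, and round via the generalization of Theorem~\ref{cor:abz-rnd} with threshold $\tau_q=\log^{(q)}L$.

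One point of exposition to sharpen: your explanation of why the ball-multiplicity is $O(q)$ is slightly off. You write that the $q$ applications of the basic rounding ``each contribute a constant,'' suggesting the $O(q)$ arises by summing $q$ constants at each level. But each level $t\in[\tau_q,L]$ lies in exactly \emph{one} of the $q$ iterated-log bands $[\log^{(j)}L,\log^{(j-1)}L]$, so only one rounding touches it. The $O(q)$ actually enters through the \emph{scaling} step: a point $p\in X'$ has total coverage $\ge 1$ from levels $\ge\tau_q$, so by pigeonhole some single band contributes $\ge 1/q$; scaling that band's LP by $q$ makes it feasible, and then Theorem~\ref{thm:abz-rnd} gives $|N\cap L_t|\le q\,k_t + h$. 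Within band $j$ the height is $h\approx\log^{(j-1)}L$ while the minimum budget is $2^{\log^{(j)}L}=\log^{(j-1)}L$, so $h\le k_t$ and the bound is $(q+1)k_t=O(q)k_t$. (Relatedly, your claim ``$2^{\tau_q}\ge\log L$'' is only true for $q\le 2$; for general $q$ you need the full chain of $q$ bands, not just two.) You correctly flagged this as the main obstacle, so this is just a matter of pinning down the mechanism.
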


\subsection{Part III: Clever Enumeration of Large Radii Clusters} \label{sec:part3}
\def\OPT{\mathsf{OPT}}
In this section, we show how to obtain the $(O(1),O(1))$-factor bi-criteria algorithm.
At a high level, our algorithm tries to ``guess'' the centers\footnote{Actually, we end up guessing centers ``close'' to the optimum centers, but for this introductory paragraph this intuition is adequate.}$A$ of large radius, that is $\rad_i$ for $i \leq \tau := \log\log L = \log\log\log k$, which the optimum solution uses.
However, this guessing is done in a cleverer way than in Corollary~\ref{cor:qapx}.
In particular,
given a guess which is consistent with the optimum solution (the initial ``null set'' guess is trivially consistent), our enumeration procedure generates a list of candidate additions to $A$ of size {\em at most} $2^\tau\approx \polyloglog k$ (instead of $n$), one of which is a consistent enhancement of the guessed set $A$. This reduction in number of candidates also requires us to maintain a guess $D$ of points where the optimum solution {\em doesn't} open centers. Furthermore, we need to argue that the ``depth of recursion'' is also bounded by $\polyloglog k$; this crucially uses the technology developed in Section 3.
Altogether, we get the total time is at most $(\polyloglog k)^{\polyloglog k} = o(k)$ for large $k$.

We start with some definitions. Throughout, $A$ and $D$ represent sets of tuples of the form $(p,t)$ where $p\in X$ and $t \in \{0,1,\ldots,\tau\}$.
Given such a set $A$, we associate a partial solution $S_A$ which opens a ball of radius $22\rad_t$ at the point $p$.
For the sake of analysis, fix an optimum solution $\OPT$. We say the set $A$ is {\bf consistent} with $\OPT$ if for all $(p,t)\in A$, there exists a {\em unique} $q\in X$ such that $\OPT$ opens a ball of radius $\rad_t$ at $q$ and $d(p,q)\leq 11\rad_t$. In particular, this implies that $S_A$ covers all points which this $\OPT$-ball covers.
We say the set $D$ is {\bf consistent} with OPT if for all $(q,t)\in D$, OPT {\em doesn't} open a radius $\rad_t$ ball at $q$ (it may open a different radius ball at $q$ though).
Given a pair of sets ($A$,$D$), we define the $\level$ of each point $p$ as follows
\[
\level_{A,D}(p) := 1 + \arg\max_t \{(q,t) \in D ~~\text{for all}~~ q \in B(p,\rad_t)\}
\]
If $(A,D)$ is a consistent pair and $\level_{A,D}(p) = t$, then this implies in the $\OPT$ solution, $p$ is covered by a ball of radius $\rad_t$ or smaller.

%
%
Next, we describe a nuanced LP-relaxation for \nukc.
Fix a pair of sets $(A,D)$ as described above. Let $X_G$ be the subset of points in $X$ covered by the partial solution $S_A$.
Fix a subset $Y\subseteq X\setminus X_G$ of points. Define the following LP.
%
\begin{align}
	\forall p\in Y, & ~~~~~~~~~\sum_{t=\level(p)}^L  ~~\sum_{q \in B(p,\rad_t)} x_{q,t} ~~~\geq 1 \tag{$\text{LP}_\nukc(Y,A,D)$} \label{lp(y,d):nukc} \\
	\forall t\in 1,\cdots,h & ~~~~~~~~~~~~~ \sum_{q\in Y} x_{q,t} ~~~\leq ~~~ k_t \nonumber \\
	\forall (p,t) \in A, & ~~~~~~~~~~~~~~~~ x_{p,t} = 1 \nonumber
\end{align}
\noindent
The following claim encapsulates the utility of the above relaxation.
\begin{claim}\label{clm:simple}
	If $(A,D)$ is consistent with $\OPT$, then~\eqref{lp(y,d):nukc} is feasible.
\end{claim}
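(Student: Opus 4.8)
\textbf{Proof plan for Claim~\ref{clm:simple}.} The plan is to exhibit an explicit feasible solution $x$ to~\eqref{lp(y,d):nukc} built directly from the fixed optimum solution $\OPT$. Since $(A,D)$ is consistent with $\OPT$, for every $(p,t)\in A$ there is a unique $q=q(p,t)$ with $\OPT$ opening a radius-$\rad_t$ ball at $q$ and $d(p,q)\le 11\rad_t$. The first step is to define $x$ by starting from the integral $\{0,1\}$ indicator of $\OPT$'s placements, and then, for each $(p,t)\in A$, ``moving'' the unit of mass that $\OPT$ places at $q(p,t)$ over to $p$; that is, we set $x_{p,t}=1$ and zero out the $\OPT$-ball at $q(p,t)$. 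Because the map $(p,t)\mapsto q(p,t)$ is injective in the relevant sense (the uniqueness in the definition of consistency guarantees distinct tuples in $A$ do not collide on the same $\OPT$-ball of the same radius type), this reassignment does not increase the number of radius-$t$ balls used, so the budget constraints $\sum_{q\in Y}x_{q,t}\le k_t$ continue to hold. By construction $x_{p,t}=1$ for all $(p,t)\in A$, giving the third family of constraints.

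The second and main step is to verify the covering constraints $\sum_{t=\level(p)}^{L}\sum_{q\in B(p,\rad_t)}x_{q,t}\ge 1$ for each $p\in Y$. Fix such a $p$. In $\OPT$, $p$ is covered by some ball, say of radius $\rad_s$ centered at some point $z$, so $d(p,z)\le \rad_s$. I need two things: (a) that this coverage survives in $x$ — i.e.\ either the $\OPT$-ball at $z$ is still ``on'' in $x$, or if it was moved to some $p'$ with $(p',s)\in A$ then $p'$ still lies within $\rad_s$ of $p$ (here the triangle inequality with the $11\rad_t$ slack, together with the fact that $p\in Y\subseteq X\setminus X_G$ is not already covered by $S_A$, is what needs to be tracked — in fact if $(p',s)\in A$ covered $p$ then $p\in X_G$, a contradiction, so actually the relevant $\OPT$-ball at $z$ was \emph{not} moved, hence $x_{z,s}=1$ and $z\in B(p,\rad_s)$); and (b) that $s\ge \level(p)$, so that this term actually appears in the sum. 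Point (b) is where the definition of $\level_{A,D}$ and the consistency of $D$ enter: $\level_{A,D}(p)=t$ means every $q\in B(p,\rad_t)$ has $(q,t)\in D$, and consistency of $D$ says $\OPT$ opens no radius-$\rad_t$ ball at any such $q$; since $\OPT$ covers $p$ by a radius-$\rad_s$ ball centered within distance $\rad_s$ of $p$, that center is in $B(p,\rad_s)$, so $s$ cannot be any of the radius types forced into $D$ around $p$, forcing $s\ge \level(p)$.

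I expect the main obstacle to be the bookkeeping in part (a): carefully arguing that the ``mass-moving'' operation cannot destroy the one $\OPT$-ball responsible for covering a point $p\in Y$, using precisely the hypothesis $p\notin X_G$ (so no $(p',t)\in A$ has $S_A$ covering $p$) to rule out the bad case where the covering ball was relocated to a point of $A$ now too far from $p$. Once that is pinned down, the budget and $x_{p,t}=1$ constraints are immediate from the construction, and the $\level$-truncation of the covering sum follows from the consistency of $D$ as sketched above. I would present the argument as: (1) define $x$ from $\OPT$ via the reassignment along the consistency-witnesses of $A$; (2) check budgets and equality constraints; (3) fix $p\in Y$, locate its $\OPT$-covering ball, show it is untouched by the reassignment (using $p\notin X_G$), and show its radius type is $\ge\level(p)$ (using consistency of $D$); conclude feasibility.
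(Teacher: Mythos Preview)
Your proposal is correct and takes essentially the same approach as the paper's proof: build the feasible point from $\OPT$, remove (or in your formulation, relocate) the $\OPT$-balls that witness consistency of $A$, verify budgets via the uniqueness in the consistency definition, and then for each $p\in Y$ use $p\notin X_G$ to argue the relevant $\OPT$-ball was untouched and consistency of $D$ to argue its radius type is at least $\level(p)$. Your write-up is in fact slightly more explicit than the paper's about the equality constraints $x_{p,t}=1$ for $(p,t)\in A$; just be careful in part~(b) with the off-by-one in unpacking the definition of $\level_{A,D}(p)$ (the $\arg\max$ gives $\level(p)-1$, not $\level(p)$), though your conclusion that $s\ge\level(p)$ is exactly what is needed.
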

\begin{proof}
	We describe a feasible solution to the above LP using $\OPT$. Given $\OPT$, define $O$ to be the collection of pairs $(q,t)$ where
	$\OPT$ opens a radius $\rad_t$ ball at point $q$. Note that the number of tuples in $O$ with second term $t$ is $\leq k_t$.

	Since $A$ is consistent with $\OPT$, for every $(p,t)\in A$, there exists a unique $(q,t) \in O$; remove all such tuples from $O$.
	Define $x_{q,t} = 1$ for all other remaining tuples. By the uniqueness property, we see the second inequality of the LP is satisfied.
	We say a  point $p\in X$ is covered by $(q,t)$ if $p$ lies in the $\rad_t$-radius ball around $q$. Since $Y\subseteq X\setminus X_G$, and since the partial solution $S_A$
	contains all points $p$ which is covered by all the removed $(q,t)$ tuples, we see that every point $p\in Y$ is covered by some remaining $(q,t)\in O$.
	Since $D$ is consistent with $\OPT$, for every point $p\in Y$ and $t < \level_{A,D}(p)$, if $q\in B(p,\rad_t)$ then $(q,t)\notin O$. Therefore, the first inequality is also satisfied.	
 \end{proof}

Finally, for convenience, we define a {\bf forbidden set} $F := \{(p,i): p\in X, 1\leq i\leq \tau\}$ which if added to $D$ disallows any large radii balls to be placed anywhere.

Now we are ready to describe the enumeration~\Cref{alg:enum}. We start with $A$ and $D$ being null, and thus vacuously consistent with $\OPT$.
The enumeration procedure ensures that: given a consistent $(A,D)$ tuple, either it finds a good solution using LP rounding (Step 10), or
generates candidate additions (Steps 18--20) to $A$ or $D$ ensuring that one of them leads to a larger consistent tuple.

\begin{algorithm}
\caption{${\sf Enum}(A,D,\gamma)$}
\label{alg:enum}
\begin{algorithmic}[1]

\STATE Let $X_G = \{ p \, : \, \exists \, (q,i) \in A \text{ s.t } d(p,q) \leq 22 \rad_i \}$ denote points covered by $S_A$.
\IF{there is {\bf no}  feasible solution to $LP_{\nukc}(X \setminus X_G,A,D)$}
	\STATE {\bf Abort.} \\ //{\em Claim \ref{clm:simple} implies $(A,D)$ is not consistent.}
\ELSE
\STATE $x^*$ be a feasible solution to $LP_{\nukc}(X \setminus X_G,A,D)$.
\ENDIF
\STATE Let $X_B = \{ u \in X \setminus X_G : \cov_{\geq \tau}(u) \geq \frac12 \}$ denote bottom-heavy points in $x^*$
\STATE Let $S_B$ be the solution implied by~\Cref{cor:abz-rnd}. \\ // {\em This solution opens $O(k_t)$ balls of radius $O(\rad_t)$ for $\tau\leq t\leq L$ and covers all of $X_B$.}
\STATE Let $X_T = X \setminus (X_G \cup X_B)$ denote the top heavy points in $x^*$
\IF{$LP_\nukc(X_T,A,F\cup D)$ has a feasible solution $x_T$}\label{algo:if}
\STATE By definition of $F$, in $x_T$ we have $\cov_{\geq \tau}(u) = 1$ for all $u\in X_T$.
\STATE $S_T$ be the solution implied by~\Cref{cor:abz-rnd}. \\ // {\em This solution opens $O(k_t)$ balls of radius $O(\rad_t)$ for $\tau\leq t\leq L$ and covers all of $X_T$.}
\STATE Output $(S_A \cup S_B \cup S_T)$. {\em //This is a $(O(1),O(1))$-approximation for the \nukc instance.}\label{algo:op}
\ELSE
\FOR{every level $0 \leq t \leq \tau$}
\STATE Let $C_t=\{p \in X_T$ s.t $\level_{A,D}(p)=t\} $, the set of points in $X_T$ with $\level$ $t$.
\STATE Use the LP-aware reduction from Section~\ref{sec:3} using $x^*$ and the set of points $C_t$ to create tree $T_t$. \label{step:22}
\FOR{every winner $p$ at level $t$ in $T_t$}
\STATE ${\sf Enum}(A\cup\{(p,t)\},D,\gamma -1)$ \label{state:1}
\STATE ${\sf Enum}(A,D\cup \bigcup_{p'\in B(p,11\rad_t)} \{(p', t)\}) , \gamma -1)$ \label{state:2}
\ENDFOR
\ENDFOR
\ENDIF
\end{algorithmic}
\end{algorithm}
Define $\gamma_0 := 4\log\log k\cdot \log\log\log k$.
The algorithm is run with ${\sf Enum}(\emptyset,\emptyset,\gamma_0)$.
The proof that we get a polynomial time $(O(1),O(1))$-bicriteria approximation algorithm follows from three lemmas.
Lemma~\ref{lem:final} shows that if Step~\ref{algo:if} is true with a consistent pair $(A,D)$, then the output in Step~\ref{algo:op} is a $(O(1),O(1))$-approximation.
Lemma~\ref{lem:if} shows that indeed Step~\ref{algo:if} is true for $\gamma_0$ as set. Finally, Lemma~\ref{label:runtime} shows with such a $\gamma_0$, the algorithm runs in polynomial time.
\begin{lemma}\label{lem:final}
	If $(A,D)$ is a consistent pair such that Step~\ref{algo:if} is true, then the solution returned is an $(O(1),O(1))$-approximation algorithm.
\end{lemma}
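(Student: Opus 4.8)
The plan is to verify directly that $S := S_A \cup S_B \cup S_T$, the set output in Step~\ref{algo:op}, covers all of $X$ while opening only $O(k_t)$ balls of radius $O(\rad_t)$ of each type $t$; the only nontrivial ingredients are \Cref{cor:abz-rnd} and the consistency of $(A,D)$ with $\OPT$. For coverage I would first use that, by definition, $X_T = X \setminus (X_G \cup X_B)$, so $X = X_G \cup X_B \cup X_T$ and it suffices to cover each of the three pieces. The piece $X_G$ is covered by $S_A$ by construction: every $p \in X_G$ has some $(q,i) \in A$ with $d(p,q) \le 22\rad_i$, and $S_A$ opens a ball of radius $22\rad_i$ at $q$. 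The piece $X_B$ is covered by $S_B$: its points are exactly the $u \in X \setminus X_G$ with $\cov_{\ge \tau}(u) \ge \tfrac12$ in the LP solution $x^*$, so feeding $x^*$ (restricted to $X_B$) into \Cref{cor:abz-rnd} yields $S_B$, which covers $X_B$ and opens $O(k_t)$ balls of radius $O(\rad_t)$ for each $\tau \le t \le L$ (and nothing of type $t < \tau$).

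The interesting piece is $X_T$, and here I would invoke the hypothesis that Step~\ref{algo:if} holds, so that $LP_\nukc(X_T, A, F \cup D)$ has a feasible solution $x_T$. Since $F \subseteq F \cup D$ contains every pair $(q,i)$ with $1 \le i \le \tau$, the definition of $\level$ gives $\level_{A, F\cup D}(p) > \tau$ for every point $p$; hence the covering constraints of $x_T$ are supported only on radius types $\ge \tau$, and every $u \in X_T$ has $\cov_{\ge \tau}(u) \ge 1$ when measured in $x_T$. (This is exactly the purpose of the forbidden set $F$: it forces $x_T$ to be bottom-heavy on $X_T$, even though $x^*$ was top-heavy there.) Applying \Cref{cor:abz-rnd} to $x_T$ restricted to $X_T$ yields $S_T$, which covers $X_T$ using $O(k_t)$ balls of radius $O(\rad_t)$ for $\tau \le t \le L$. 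Together the three sets cover $X_G \cup X_B \cup X_T = X$.

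It then remains to count balls per radius type $t$. For $t > \tau$ only $S_B$ and $S_T$ open balls, contributing $O(k_t) + O(k_t) = O(k_t)$ balls, all of radius $O(\rad_t)$; for $t = \tau$ the same $O(k_t)$ bound holds after adding the $S_A$-contribution bounded next; and for $t < \tau$ only $S_A$ contributes, with balls of radius $22\rad_t$. To bound the number of type-$t$ balls in $S_A$ I would use that $(A,D)$ is consistent with $\OPT$: each $(p,t) \in A$ witnesses a \emph{distinct} point $q$ at which $\OPT$ opens a radius-$\rad_t$ ball, and $\OPT$ opens at most $k_t$ such balls, so $|\{(p,t) \in A\}| \le k_t$. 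Hence for every $t$, $S$ opens $O(k_t)$ balls of radius $O(\rad_t)$, i.e., $S$ is an $(O(1),O(1))$-bicriteria solution for $\widehat{\I}$, and \Cref{thm:compress}(iii) converts it into an $(O(1),O(1))$-bicriteria solution for the original instance $\I$. The one point that needs genuine care is that $S_T$ must be rounded from $x_T$ and not from $x^*$ --- on $X_T$ the solution $x^*$ is top-heavy, and it is precisely $F$ that restores the hypothesis of \Cref{cor:abz-rnd} for that rounding; the rest is routine accounting of constants and per-type budgets. (The difficult statements come afterwards: that Step~\ref{algo:if} becomes true within the recursion budget $\gamma_0$, and the running-time bound.)
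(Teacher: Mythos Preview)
Your proof is correct and follows the same approach as the paper's, only with considerably more detail: the paper's own proof is a terse two-line summary that $S_A$ respects the budgets $k_t$ for $t\le\tau$ by consistency of $A$, and that $S_B,S_T$ each open $O(k_t)$ balls of radius $O(\rad_t)$ for $t\ge\tau$ by design. You have correctly unpacked the coverage argument (in particular, why $F$ forces $x_T$ to satisfy the hypothesis of \Cref{cor:abz-rnd} on $X_T$) and the per-type ball accounting, and your closing remark about invoking \Cref{thm:compress}(iii) to return to the original instance $\I$ is a useful clarification the paper leaves implicit.
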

\begin{proof}
	Since $A$ is consistent with $\OPT$, $S_A$ opens at most $k_t$ centers with radius $\leq 22\rad_t$ for all $0\leq t\leq \tau$.
	By design, $S_B$ and $S_T$ open at most $O(k_t)$ centers with radius $\leq O(r_t)$ for $\tau\leq t\leq L$.
\end{proof}
\begin{lemma} \label{lem:if}
	${\sf Enum}(\emptyset,\emptyset,\gamma_0)$ finds consistent $(A,D)$ such that Step \ref{algo:if} is true.
\end{lemma}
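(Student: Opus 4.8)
The goal is to show that the recursion tree of ${\sf Enum}(\emptyset,\emptyset,\gamma_0)$ contains a node whose $(A,D)$ pair is consistent with $\OPT$ and for which Step~\ref{algo:if} is true. The strategy is to trace a single root-to-node path in the recursion tree, maintaining the invariant that the current $(A,D)$ is consistent with $\OPT$, and to argue that this path must terminate at a node where Step~\ref{algo:if} holds --- before the budget parameter $\gamma$ runs out. The main work is a potential/progress argument bounding the depth of this ``good'' path by $\gamma_0$.

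\begin{proof}
We trace a path in the recursion tree starting from the root invocation ${\sf Enum}(\emptyset,\emptyset,\gamma_0)$, maintaining the invariant that the pair $(A,D)$ at the current node is consistent with $\OPT$. The base pair $(\emptyset,\emptyset)$ is vacuously consistent, so by Claim~\ref{clm:simple} the LP in Step~2 is feasible and we do not abort. Suppose we are at a consistent node and Step~\ref{algo:if} is \emph{false}, so the algorithm recurses. Consider the LP solution $x^*$ to $LP_\nukc(X\setminus X_G, A, D)$ and the top-heavy set $X_T$. Since $LP_\nukc(X_T, A, F\cup D)$ is infeasible but $x^*$ is a feasible solution to $LP_\nukc(X_T, A, D)$ (being feasible for the larger point set $X\setminus X_G$), there must be a point $p\in X_T$ which $x^*$ covers using balls of radius $\rad_t$ for some $t<\tau$ --- i.e., with positive fractional mass at some level below $\tau$. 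Following the reasoning in the proof of Claim~\ref{clm:simple}, since $(A,D)$ is consistent with $\OPT$ and $p\in X_T\subseteq X\setminus X_G$, $\OPT$ covers $p$ by a ball of radius $\rad_s$ for some $s$ with $\level_{A,D}(p)\le s\le\tau$; let $q^\star$ be the center of this $\OPT$-ball, so $d(p,q^\star)\le\rad_s\le\rad_t$ where $t=\level_{A,D}(p)$. In the tree $T_t$ built in Step~\ref{step:22} from $x^*$ and the set $C_t$ of level-$t$ points, there is a winner $p'$ at level $t$ to which $p$ (or the point $p$ itself) is clustered; by the clustering radius in Algorithm~\ref{alg:createTree-new} / Theorem~\ref{thm:general-redux}, $d(p,p')$ is at most a constant multiple of $\rad_t$, and choosing the constants so that $d(p',q^\star)\le 11\rad_t$, exactly one of the two recursive calls in Steps~\ref{state:1}--\ref{state:2} preserves consistency: if $\OPT$ opens a radius-$\rad_t$ ball at $q^\star$ then ${\sf Enum}(A\cup\{(p',t)\},D,\gamma-1)$ is consistent (the uniqueness of $q^\star$ follows because the balls in $O$ of radius $\rad_t$ have disjoint $\rad_t$-balls around points within $11\rad_t$, exactly as in the disjointness argument of Lemma~\ref{lem:rmfcfeasible}); otherwise, $s<t$ actually cannot happen since $t=\level_{A,D}(p)$ forces $\OPT$ to cover $p$ at level $\ge t$, so $s=t$ and in fact the ``$D$-branch'' is used only when the level of $p$ strictly drops. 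Either way, we follow the consistent branch and continue.

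It remains to bound the length of this path by $\gamma_0=4\log\log k\cdot\log\log\log k = 4\log L\cdot\tau$. Each step along the path either adds a tuple to $A$ or adds a block $\bigcup_{p'\in B(p,11\rad_t)}\{(p',t)\}$ to $D$. For the $A$-branches: since $A$ stays consistent with $\OPT$, every tuple $(p',t)\in A$ maps (by the uniqueness property) to a \emph{distinct} radius-$\rad_t$ ball of $\OPT$, and $\OPT$ has at most $k_t=2^t$ such balls; summing over $0\le t\le\tau$ gives at most $\sum_{t=0}^\tau 2^t = O(2^\tau)=O(\log L)$ total $A$-additions along the path. For the $D$-branches: a $D$-addition at level $t$ happens when the argument above shows $\level_{A,D}(p)$ for the relevant witness point strictly increases past $t$; since $\level$ is bounded by $\tau$ and once a level-$t$ block is forbidden at a location it never needs to be re-added on this path (the new $D$ blocks out all of $B(p,11\rad_t)$, which by the disjointness of $\OPT$'s $\rad_t$-balls covers the region of the offending $\OPT$-center), the number of $D$-additions is controlled by the technology of Section~\ref{sec:3}: the key point, analogous to the depth bound in Theorem~\ref{thm:general-redux}, is that between two successive $D$-additions at the same level $t$ the coverage level of the witness must have gone \emph{up} and then come back down through an $A$-addition, and each such cycle consumes an $A$-addition. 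Hence the total path length is $O(\log L\cdot\tau)\le\gamma_0$, so the budget $\gamma$ never reaches $0$ before we arrive at a node where Step~\ref{algo:if} is true. Since at every node on the path $(A,D)$ is consistent, we never abort in Step~3, and the node where Step~\ref{algo:if} finally holds is the desired consistent pair.
\end{proof}

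The step I expect to be the main obstacle is the depth bound: cleanly accounting for the interleaving of $A$- and $D$-additions so that their total is $O(\log L \cdot \tau)$, which is where the ``technology developed in Section~3'' (the geometric shrinking of winner distances and the disjointness of the radius-$\rad_t$ balls of winners) must be invoked to ensure a $D$-block at level $t$ genuinely retires the corresponding $\OPT$-center forever along the current path, and that each forced increase-then-decrease of $\level$ is chargeable to a distinct element of $A$.
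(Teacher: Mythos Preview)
Your high-level plan --- trace a single root-to-node path in the recursion tree along which $(A,D)$ stays consistent with $\OPT$, and bound its length by $\gamma_0$ --- is exactly right and matches the paper. There are, however, two genuine gaps.

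\medskip
\textbf{Identifying the witness.} Your first paragraph derives the witness point from the LP solution $x^*$: you pick $p\in X_T$ that $x^*$ covers at a level $<\tau$. But \emph{every} point of $X_T$ is top-heavy in $x^*$ by definition, so this does not single anything out, and it tells you nothing about how $\OPT$ covers $p$. The correct argument is the contrapositive: if $\OPT$ covered all of $X_T$ using only radii $\rad_s$ with $s>\tau$, then $\OPT$'s small-radius balls would give a feasible solution to $LP_\nukc(X_T,A,F\cup D)$, contradicting Step~\ref{algo:if} being false. Hence there is some $q\in X_T$ that $\OPT$ covers with a ball of radius $\rad_s$, $s\le\tau$, centered at $o_q$. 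With $t:=\level_{A,D}(q)$ one gets $t\le s\le\tau$ (the lower bound from consistency of $D$), and then one takes the level-$t$ winner $p$ in $T_t$ above $q$, giving $d(p,q)\le 8\rad_t$ by \Cref{thm:general-redux}(i). The case split is on whether $\OPT$ opens \emph{any} radius-$\rad_t$ ball within $11\rad_t$ of $p$ (not necessarily at $o_q$).

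\medskip
\textbf{The depth bound.} This is where your proposal really falls short, and you flag it yourself. Your ``up-and-down'' argument --- that between two $D$-additions at the same level the witness's $\level$ must rise and fall via an $A$-addition --- is not valid: the witness point $q$ changes from call to call, so there is no single quantity that is going up and down, and nothing prevents many $D$-additions at the same level in succession. What is missing is a potential tied to $\OPT$. Define, for each large-radius $\OPT$ center $o\in O_\tau$ and each $0\le t\le\tau$, the indicator $Z^{(D)}_{o,t}$ which is $1$ iff $(u,t)\in D$ for every $u\in B(o,2\rad_t)$; set
\[
\Phi(A,D)\;=\;|A|\;+\;\sum_{o\in O_\tau}\sum_{t=0}^\tau Z^{(D)}_{o,t}.
\]
In case~(A) clearly $|A|$ increases. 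In case~(B): since $\level_{A,D}(q)=t$, some $v\in B(q,\rad_t)$ has $(v,t)\notin D$; and $d(q,o_q)\le\rad_s\le\rad_t$, so $v\in B(o_q,2\rad_t)$ and $Z^{(D)}_{o_q,t}=0$ before the step. After adding $B(p,11\rad_t)\times\{t\}$ to $D$, since $d(p,o_q)\le d(p,q)+d(q,o_q)\le 8\rad_t+\rad_t=9\rad_t$ we have $B(o_q,2\rad_t)\subseteq B(p,11\rad_t)$, so $Z^{(D)}_{o_q,t}$ flips to $1$. Thus $\Phi$ strictly increases along the path; since $\Phi\le 2^{\tau+1}+2^{\tau+1}(\tau+1)\le 2\log\log k\cdot\log\log\log k<\gamma_0$, the path terminates in time. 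This potential is the missing idea.
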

\begin{proof}
	For this we identify a particular execution path of the procedure ${\sf Enum}(A,D,\gamma)$, that at every point maintains a tuple $(A,D)$ that is consistent with OPT. At the beginning of the algorithm, $A=\emptyset \text{ and } D=\emptyset$, which is consistent with $\OPT$.
	
	Now consider a tuple $(A,D)$ that is consistent with $\OPT$ and let us assume that we are within the execution path ${\sf Enum}(A,D,\gamma)$. Let $X\setminus X_G$ be the points not covered by $A$ and let $x^*$ be a solution to $LP_{\nukc}(X\setminus X_G,A,D)$. If $\OPT$ covers all top-heavy points $X_T$ using only smaller radii, then this implies $LP_\nukc(X_T,A,F\cup D)$ has a feasible solution implyin Step \ref{algo:if} is true. So, we may assume, there exists at least one top-heavy point $q\in X_T$ that $\OPT$ covers using a ball radii $\geq \rad_\tau$ around a center $o_q$. In particular, $\level_{A,D}(q) \leq \tau$.
	Let $q \in C_t$ and hence $q$ belongs to $T_t$ for some $0\leq t\leq \tau$. Let $p \in P_t$ be the level $t$ winner in $T_t$ s.t $q$ belongs to the sub-tree rooted at $p$ in $T_i$; $p$ may or may not be $q$.
	We now show that there is at least one recursive call where we make non-trivial progress in $(A,D)$. Indeed, we do this in two cases:
	
	\medskip \noindent {\bf case (A)} If $\OPT$ opens a ball of radius $\rad_t$ at a point $o$ such that $d(o,p)\leq 11\rad_{t}$. In this case, Step~\ref{state:1} maintains consistency.
	Furthermore, we can ``charge'' $(p,t)$ uniquely to the point $o$ with radius $\rad_t$.
	To see this, for contradiction, let us assume that before arriving to the recursive call where $(p,t)$ is added to $A$, some other tuple $(u,t) \in A^{'}$, in an earlier recursive call with $(A^{'},D^{'})$ as parameters charged to $(o,t)$. Then by definition we know that $d(u,o) \leq 11\rad_t$ implying $d(u,p) \leq 22\rad_t$. Then $p$ would be in $X_G$ in all subsequent iterations, contradicting that $p\in X_T$ currently.
	
	\medskip \noindent {\bf case (B)}: Tf there is no $(o,t) \in \OPT$ with $d(o,p)\leq 11\rad_{t}$, then for all points $p'\in B(p,11\rad_t)$ we can add $(p',t)$ to $D$.
	In this case, we follow the recursive call in Step~\ref{state:2}.
	
	To sum, we can definitely follow the recursive calls in the consistent direction; but how do we bound the depth of recursion.
	In case (A), the measure of progress is clear -- we increase the size of $|A|$, and it can be argued (we do so below) the maximum size of $A$ is at most $\polyloglog k$.
	Case (B) is subtler. We do increase size of $D$, but $D$ could grow as large as $\Theta(n)$. Before going to the formal proof, let us intuitively argue what ``we learn'' in Case (B).
	Recall $q$ is covered in $\OPT$ by a ball around the center $o_q$.
	Since $\level(q) = t \leq \tau$, by definition there is a point $v\in B(q,\rad_t)$ such that $(v,t)\notin D$, and $d(q,o_q) \leq \rad_t$.
	Together, we get $d(v,o_q) \leq 2\rad_t$, that is, $v\in B(o_q,2\rad_t)$.
	Now also note, since $q$ lies in $p$'s subtree in $T_t$, by construction of the trees, $d(p,q) \leq 8\rad_t$ by property~(i) of~\Cref{thm:general-redux}. Therefore, $d(p,o_q) \leq 9\rad_t$ and in case (B), for {\em all} points $u\in B(o_q,2\rad_t)$ we put $(u,t)$ in the set $D$ in the next recursive call.
	This is ``new info'' since for the current $D$ we know that at least one point $v\in B(o_q,2\rad_t)$, we had $(v,t)\notin D$.
	
	Formally, we define the following potential function. Let $O_\tau$ denote the centers in $\OPT$ around which balls of radius $\rad_j$, $j\leq \tau$ have been opened.
	Given the set $D$, for $0\leq t\leq \tau$ and for all $o\in O_\tau$, define the indicator variable $Z^{(D)}_{o,t}$ which is $1$ if for all points $u\in B(o,2\rad_t)$, we have $(u,t)\in D$ and $0$ otherwise.
	\[
	\Phi(A,D) :=  |A| ~~ + ~~ \sum_{o \in O_\tau} \sum^{\tau}_{t=0} Z^{(D)}_{o,t}
	\]
	\noindent
	Note that $\Phi(\emptyset,\emptyset) = 0$. From the previous paragraph, we conclude that in both case (A) or (B), the potential increases by {\em at least} $1$. Finally, for any consistent $A,D$ we can upper bound $\Phi(A,D)$ as follows. Since $A$ is consistent, $|A| \leq \sum_{t=0}^\tau 2^t \leq 2^{\tau+1} = \log L = \log\log k$. The second term in $\Phi$ is at most
	$2^{\tau + 1}\cdot \tau = \log L\log\log L$.  Thus, in at most $2\log\log k\cdot \log\log\log k < \gamma_0$ steps we reach a consistent pair $(A,D)$ with Step~\ref{algo:if} true.
\end{proof}

%
%

\begin{lemma}
\label{label:runtime}
${\sf Enum}(\emptyset,\emptyset,\gamma_0)$ runs in polynomial time for large enough $k$.
\end{lemma}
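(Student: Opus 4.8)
\textbf{Proof plan for Lemma~\ref{label:runtime}.}

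The plan is to bound the total number of recursive calls made by ${\sf Enum}(\emptyset,\emptyset,\gamma_0)$ and then observe that each individual call does only polynomial work. For the first part, I would analyze the recursion tree of ${\sf Enum}$. At any node of this recursion tree, the work done before branching is clearly polynomial: solving $LP_\nukc(X\setminus X_G,A,D)$, running the $\round$ procedure from~\Cref{cor:abz-rnd} (which itself invokes~\Cref{thm:abz-rnd} and the reduction of~\Cref{thm:general-redux}, all polynomial), checking feasibility of $LP_\nukc(X_T,A,F\cup D)$, and — if we branch — running the LP-aware reduction of Section~\ref{sec:3} on each set $C_t$ to build the trees $T_t$. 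So it suffices to show the recursion tree has quasi-polynomially-bounded size, and in fact $o(k)$ size, so that the total is $o(k)\cdot\poly(n) = \poly(n)$ for large $k$ (here I am using $n = |X| \geq \Omega(\poly(k))$ implicitly, or more carefully just bounding by $(\polyloglog k)^{\polyloglog k}\cdot\poly(n)$ which is $\poly(n)$).

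The key quantities are the \emph{branching factor} and the \emph{depth} of the recursion. For the branching factor: in a single invocation, if Step~\ref{algo:if} fails, we loop over all $0\leq t\leq\tau$ and, for each winner $p$ at level $t$ in the tree $T_t$, we make two recursive calls (Steps~\ref{state:1} and~\ref{state:2}). The crucial point — this is where Section~3's technology is used — is that the number of winners at level $t$ in $T_t$ is at most $k_t = 2^t \leq 2^\tau$, because by Claim in the proof of~\Cref{lem:rmfcfeasible}, the $r_t$-balls around winners are pairwise disjoint, and the \nukc LP budget caps the number of such disjoint balls by $k_t$; thus the winners at level $t$ are at most $\sum_{q\in W_t}\cov_t(q)$-worth... more directly, $|W_t| = |L_t \cap \text{winners}| \leq k_t$ follows from the budget constraint being feasible and each winner contributing positive coverage — actually the cleanest bound is simply $|W_t|\leq 2^t$ since $k_t = 2^t$. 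Hence the branching factor at a single invocation is at most $\sum_{t=0}^{\tau} 2\cdot 2^t \leq 2^{\tau+2} = 4\log L = O(\log\log k)$. For the depth: the third argument $\gamma$ decreases by $1$ at each recursive call, so the depth is at most $\gamma_0 = 4\log\log k\cdot\log\log\log k = O(\polyloglog k)$, and once $\gamma$ hits $0$ we must stop (or we have already found the solution, as guaranteed by~\Cref{lem:if}).

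Putting these together, the recursion tree has size at most $(O(\log\log k))^{\gamma_0} = (O(\log\log k))^{O(\log\log k\cdot\log\log\log k)}$. Taking logarithms, this is $\exp\!\big(O(\log\log k\cdot\log\log\log k\cdot\log\log\log k)\big) = \exp\!\big(O(\log\log k\cdot(\log\log\log k)^2)\big)$, which is $k^{o(1)} = o(k)$ for large $k$; in particular it is polynomial in $n$ whenever $n$ is polynomially related to $k$, and in general the total running time is $o(k)\cdot\poly(n,k) = \poly(n,k)$. The main obstacle, and the only place where real care is needed, is justifying the branching-factor bound — specifically that the number of winners at level $t$ in each tree $T_t$ built in Step~\ref{step:22} is at most $2^\tau$ rather than $n$. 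This is exactly the payoff of constructing the trees via the LP-aware reduction rather than arbitrarily: the disjoint-ball argument from the proof of~\Cref{lem:rmfcfeasible} (now applied to the LP solution $x^*$ and budgets $k_t = 2^t$) gives $|W_t|\leq k_t = 2^t \leq 2^\tau = O(\polyloglog k)$, which is what tames the branching. Everything else is bookkeeping: each node's local computation is polynomial, the depth is $\gamma_0$ by construction of the recursion, and the product of the two bounds is $o(k)$.
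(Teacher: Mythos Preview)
Your overall architecture matches the paper's: bound the branching factor of a single ${\sf Enum}$ call, bound the recursion depth by $\gamma_0$, and multiply. The depth bound and the ``each call is polynomial'' parts are fine. The gap is in your branching-factor bound.

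You claim the number of winners at level $t$ in $T_t$ is at most $k_t = 2^t$, justifying this by the disjoint-ball argument from~\Cref{lem:rmfcfeasible}. But that argument only gives $\sum_{p\in W_t}\cov_t(p)\le k_t$, not $|W_t|\le k_t$: winners at level $t$ are selected by minimizing $\cov_{\ge t+1}$, and there is no reason each has $\cov_t(p)\ge 1$ (or even $>0$). Your own parenthetical hedging (``each winner contributing positive coverage\ldots actually the cleanest bound is simply $|W_t|\le 2^t$'') is exactly where the argument breaks. In general one can have many more than $k_t$ points in $C_t$ that are pairwise $>2r_t$ apart, so $|W_t|$ is not bounded by $k_t$.

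The missing idea is to use that every winner $p$ at level $t$ in $T_t$ lies in $C_t\subseteq X_T$ and hence is \emph{top-heavy}. From the LP constraint (with $\level(p)=t$) one has $\cov_{\ge t}(p)\ge 1$, and top-heaviness gives $\cov_{\ge\tau}(p)<\tfrac12$, so $\sum_{s=t}^{\tau-1}\cov_s(p)>\tfrac12$. Since winners at level $t$ are pairwise $>2r_t\ge 2r_s$ apart for every $s\ge t$, their $r_s$-balls are disjoint for all $s\in\{t,\ldots,\tau-1\}$, and summing gives $\tfrac12|P_t|\le\sum_{s=t}^{\tau-1}k_s\le 2^{\tau}$, i.e.\ $|P_t|=O(\log\log k)$. (Equivalently, this is the paper's argument phrased via the tree: the $y$-mass on each winner's path from level $t$ down to level $\tau$ is at least $\tfrac12$, and the total $y$-mass in the top $\tau$ levels is at most $\sum_{s\le\tau}k_s$.) Summing over $0\le t\le\tau$ then gives branching factor $O(\log\log k\cdot\log\log\log k)$, and the rest of your calculation goes through. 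So the fix is small but essential: you must invoke top-heaviness of $X_T$, not just disjointness of balls, to tame the branching.
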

\begin{proof}
	Each single call of ${\sf Enum}$ is clearly polynomial time, and so we bound the number of recursive calls. Indeed, this number will be $o(k)$.
	We  first bound the number of recursive calls in a single execution of Enum$(A,D,\gamma)$.
	For a fixed tuple $(A,D)$, \Cref{alg:enum}, constructs trees $T_0,\ldots,T_{\tau}$ in Step \ref{step:22}, using the reduction algorithm from Section~\ref{sec:3}.
	Let $L_{tj}$ represent the set of nodes at level $j$ in the tree $T_t$. Then $P_t = \psi_{t}(L_{tt})$ represents the set of points that are winners at level $t$ in $T_t$.
    Now for any tree $T_t$, \Cref{alg:enum} makes two recursive calls for each winner in $P_t$ (Step \ref{state:1}-\ref{state:2}). Let $P_{AD}=\bigcup_{t=0}^{\tau}{P_t}$ be the set of all the winners that the algorithm considers in a single call to ${\sf Enum}(A,D,\gamma)$. The total number of recursive calls in a single execution is therefore $2|P_{AD}|$. Now we claim that for a fixed tuple $(A,D)$, the total number of winners is bounded.
	
	\begin{claim}
		$|P_{AD}| \leq 4\log \log k \cdot \log \log \log k$
	\end{claim}
	
	\begin{proof}
		Consider a tree $T_t$ and the corresponding set $P_t$ as defined above. We use $y^{(t)}$ to denote the \rmfc LP solution given along with the tree $T_t$ given by the reduction algorithm in Section 3. Let $p \in P_t$ be a winner at level $t$ (and consequently, at level $\tau$ also), and suppose it is mapped to tree vertices $w$ at level $t$ and $w'$ at level $\tau$. Then, by the way the tree was constructed and because $p\in X_T$ is top-heavy, we have $\sum_{u \in [w,w']-\text{path}} y^{(t)}_u \geq \frac12$ (Refer to the proof of Claim~\ref{cl:coverage} for more clarity).
		So each winner at level $t$ has a path down to level $\tau$ with fractional coverage at least $\frac12$. But the total fractional coverage in the top part of the tree is at most the total budget, which is $\sum_{t=1}^{\tau} 2^{t} \leq 2 \log L \leq 2 \log \log k$. Therefore, $|P_t| \leq 4 \log \log k$. Adding for all $1\leq t\leq \tau$, gives $|P_{AD}| \leq 4\log\log k \cdot t \leq 4 \log \log k \cdot \log \log \log k$.
	\end{proof}
	Since the recursion depth $\gamma_0$, the total number of recursive calls made to the Enum is loosely upper bounded by $\gamma_0^{\polyloglog k} = o(k)$, thus completing the proof.	
\end{proof}

\noindent

\bibliography{paper}
\appendix

\end{document}